\documentclass[letterpaper,journal]{IEEEtran}

\PassOptionsToPackage{compatibility=false}{caption}

\usepackage[caption=false,font=normalsize,labelfont=sf,textfont=sf]{subfig}

\usepackage{silence}
\usepackage{graphicx}

\usepackage{setspace}
\usepackage{amsmath,amssymb,amsfonts}
\usepackage{amsthm}
\usepackage{hyperref}
\usepackage{xcolor}
\usepackage{cite}
\usepackage{float}
\usepackage{stfloats}
\usepackage{comment}
\usepackage{balance}
\usepackage[ruled,vlined]{algorithm2e}
\usepackage{booktabs}
\usepackage{caption}    
\usepackage{tabularx}

\def\a{{\mathbf a}}

\def\s{{\mathbf s}}
\def\z{{\mathbf z}}
\def\n{{\mathbf n}}
\def\v{{\mathbf v}}
\def\x{{\mathbf x}}
\def\y{{\mathbf y}}
\def\X{{\mathbf X}}
\def\Y{{\mathbf Y}}
\def\R{{\mathbf R}}
\def\P{{\mathbf P}}
\def\F{{\mathbf F}}
\def\G{{\mathbf G}}
\def\Q{{\mathbf Q}}
\def\H{{\mathbf H}}
\def\W{{\mathbf W}}
\def\T{{\mathbf T}}
\def\A{{\mathbf A}}
\def\B{{\mathbf B}}

\def\D{{\mathbf D}}
\def\M{{\mathbf M}}

\def\U{{\mathbf U}}
\def\V{{\mathbf V}}
\def\I{{\mathbf I}}

\def\S{{\mathbf S}}

\newcommand\independent{\protect\mathpalette{\protect\independenT}{\perp}}
\def\independenT#1#2{\mathrel{\rlap{$#1#2$}\mkern2mu{#1#2}}}

\newtheorem{theorem}{Theorem}

\theoremstyle{definition}
\newtheorem{remark}{Remark}

\begin{document}

\title{Game-Theoretic Latent Space Alignment for Multi-user Semantic MIMO Communications\vspace{.3cm}}



\author{Giuseppe Di Poce, Mattia Merluzzi, Emilio Calvanese Strinati, and Paolo Di Lorenzo
       
\thanks{Giuseppe Di \!Poce, Mattia Merluzzi and Emilio \!Calvanese \!Strinati are with
CEA Leti, University Grenoble Alpes, 38000 Grenoble, France (e-mail: \{giuseppe.dipoce, mattia.merluzzi,
emilio.calvanese-strinati\} @cea.fr). Paolo Di Lorenzo is with the Department
of Information Engineering, Electronics, and Telecommunications (DIET),
Sapienza University of Rome, 00184 Rome, Italy, and also with CNIT, Parma,
Italy (e-mail: paolo.dilorenzo@uniroma1.it). This work has been founded by French government (2030 ANR, ref. 22-PEFT-0010)
 and by the SNS JU project \!6G-GOALS\! under the EU’s Horizon program Grant
Agreement No \!101139232.
An initial conference work 
appears in \!\cite{di2026distributed}.}
\thanks{Manuscript received May 21, 2026; 
} \vspace{-.5cm}}



\maketitle
\begin{abstract}
Semantic communications enable AI-native wireless systems by mapping raw data into compressed task-oriented latent representations. However, independently trained agents often rely on heterogeneous latent spaces and background knowledge, leading to semantic mismatch that degrades mutual understanding and downstream task execution, especially in interference-limited multi-user wireless networks. This paper investigates distributed latent-space alignment in multi-user semantic MIMO interference networks with cognitive radio constraints. We consider primary users and semantic-aware secondary users sharing the same wireless resources, where secondary agents must simultaneously mitigate interference and align heterogeneous semantic representations. To address this problem, we formulate semantic alignment as a non-cooperative game and derive a closed-form solution for the joint optimization of linear semantic MIMO transceivers under power and interference constraints. Exploiting the structure of the problem, we recast the original matrix-valued optimization into a lower-dimensional power-allocation game, leading to an iterative semantic water-filling algorithm. We establish sufficient conditions for existence, uniqueness, and global convergence to a Nash equilibrium, explicitly relating semantic alignment properties and physical-channel interactions. Numerical results assess the performance of the proposed framework, revealing key trade-offs among semantic compression, task performance, and hierarchical spectrum access.
\end{abstract}

\begin{IEEEkeywords}
Semantic Communication, Game Theory, 6G, 
 Latent Space Alignment, Semantic Channel Equalization.
\end{IEEEkeywords}

\section{Introduction}
\IEEEPARstart{C}{ommunication}
systems have historically been engineered in accordance with the nature of the information being conveyed and the operational requirements of the target application. 
Information theory, rooted in Shannon’s seminal work~\cite{shannon1948mathematical}, establishes the fundamental limits of reliable communication and data compression for point-to-point systems, thereby defining the theoretical foundations and ultimate performance bounds of communication systems~\cite{cover1999elements}.
Modulation and source coding schemes lie in between these
extremes \cite{tishby2015deep}, 
translating theoretical limits into engineering designs.
Today, the enormous data volumes, high-dimensional sensing modalities, and unprecedented computational capabilities, push forward the role of data-driven and model-based methods in communication system design~\cite{qiao2025token,o2017deep}. 
Specifically, deep neural networks (DNNs) 
effectively enable modeling the complex, non-linear nature of real-world data, learning compact and lower-dimensional representations from raw information sources, preserving application-relevant structure and acting as \textit{semantic} extractors~\cite{xie2021deep}.  
In this context, communication serves as a meaning exchange, 
whose effectiveness is evaluated on task-relevance \cite{ma2023task}, rather than on classical bit-fidelity metrics.

This paradigm-shift is captured in Semantic and Goal-Oriented communications (SC)~\cite{strinati20216g},
whose aim is to extract the underlying structure before information transmission, saving network resources, and enabling effective task execution~\cite{gunduz2022beyond,strinati2024goal}. However, introducing semantics
fundamentally alters the mere engineering 
problem: transmitted symbols are represented by latent vectors, whose geometry is model-and-data dependent~\cite{lobashev2025hessian}, and significance must be assessed jointly with the underlying physical layer. As a result, the intrinsic relationship between learned representations, Artificial Intelligence (AI)-native applications requirements, and physical-layer restrictions (e.g. fading, interference and latency) remains insufficiently understood.
\noindent Moreover, when semantic-aware users employ heterogeneous internal representations, their model-dependence induces misaligned latent spaces~\cite{kornblith2019similarity}, arising 
\textit{semantic noise}, i.e., errors caused by mismatched logic and background knowledge. This scenario naturally arises in decentralized semantic ecosystems composed of heterogeneous AI-native devices that do not share a common internal logic or semantic representation space. For instance, inter-operating devices produced by different vendors or manufacturers may rely on proprietary foundation models and latent representations that cannot be disclosed due to intellectual property, privacy, or commercial constraints. As a consequence, transmitter and receiver semantic spaces may exhibit substantial mismatch, despite targeting the same downstream task. A similar situation emerges in neuromorphic and ultra-low-power edge AI platforms, such as \textit{NeuroCorgi}~\cite{panades2024772muj}, where compact hardware-oriented neural architectures are optimized for local inference under stringent energy and memory constraints. In these settings, semantic communication nodes may employ highly heterogeneous latent representations, making semantic interoperability particularly challenging and motivating adaptive latent-space alignment mechanisms.

\noindent \textbf{Related works.} Semantic channel equalization (SCE) addresses latent-space mismatch by aligning encoder and decoder representations when end-to-end joint training is impractical~\cite{pandolfo2025latent}. Existing approaches exploit optimal transport~\cite{alvarez2019towards}, isometry-invariant transformations~\cite{moschella2022relative}, Parseval frames~\cite{fiorellino2025frame}, and sheaf-theoretic methods that jointly learn communication topologies and alignment maps~\cite{grimaldi2025learning}. While effective, these methods mainly focus on point-to-point semantic communication settings. However, practical semantic wireless systems are inherently multi-user and interference-limited, with multiple semantic-aware agents concurrently sharing network resources over interference channels without explicit coordination, as in classical and cognitive radio networks~\cite{merluzzi2025goal}. In such scenarios, semantic alignment must coexist with interference management and opportunistic spectrum access. To this end, recent works on semantic-aware source coding have investigated joint source--channel coding~\cite{bourtsoulatze2019deep} and goal-oriented compression strategies~\cite{shao2021learning,wang2024goal,di2023goal}, following the ``learning to communicate'' paradigm~\cite{o2017introduction}. Nevertheless, most existing approaches still neglect multi-agent interactions over interference channels.\\
By contrast, interference management in semantic-agnostic wireless systems has been extensively studied through bio-inspired resource allocation and self-organizing mechanisms~\cite{dressler2010survey,di2013bio}, as well as game-theoretic formulations for distributed power control and spectrum sharing~\cite{scutari2010convex,charilas2010survey}. In these frameworks, cognitive radio nodes compete to optimize local utility functions while exploiting sensing capabilities to mitigate, or completely avoid, interference toward primary users~\cite{scutari2008competitive,pang2010design}. Motivated by these observations, we extend semantic communication toward cognitive multi-agent systems by introducing \textit{semantic-aware cognitive users}, i.e., agents capable of processing semantic information through heterogeneous latent representations while sensing the electromagnetic environment and opportunistically accessing shared wireless resources. This perspective calls for a unified framework jointly accounting for semantic alignment, interference management, and cognitive spectrum access. Although~\cite{souza2026low} studies delay--accuracy trade-offs for goal-oriented cognitive communications, semantic mismatch and multi-user interference are not considered. To the best of our knowledge, a distributed game-theoretic framework for semantic channel equalization over cognitive interference networks remain unexplored.\\
\textbf{Contributions.}
This work investigates distributed semantic channel equalization in cognitive multi-user MIMO networks, where secondary semantic-aware agents operate concurrently without explicit coordination while preserving licensed users through spatial interference constraints. The problem is formulated as the joint optimization of semantic linear transceivers under power and interference constraints, while accounting for mutual interference among cognitive users. We model the resulting interaction as a Nash Equilibrium Problem (NEP), where each player selfishly optimizes its semantic alignment strategy through linear pre(post)-equalization transformations that jointly perform semantic compression (decompression), interference mitigation, and latent-space alignment. The main contributions of this work are summarized as follows: \vspace{-.1cm}
\begin{itemize}
\item[\textit{i)}] We introduce a distributed game-theoretic framework for semantic channel equalization over cognitive MIMO interference networks, enabling latent-space alignment among heterogeneous semantic agents operating under mutual interference and hierarchical spectrum-access constraints. The proposed formulation jointly accounts for semantic mismatch, physical-layer interactions, and opportunistic spectrum sharing within a unified framework.
\item[\textit{ii)}] We formulate the joint design of semantic linear MIMO transceivers as a constrained non-cooperative optimization problem and derive a closed-form solution for the players’ best-response strategy. Specifically, we show that the original matrix-valued optimization problem can be equivalently recast as a lower-dimensional power-allocation game over transmit antennas and channel uses.
\item[\textit{iii)}] We prove that the resulting optimal strategy admits a semantic water-filling interpretation, where the allocated power jointly depends on semantic alignment and interference conditions. Building on this structure, we propose the Iterative Semantic Water-Filling (ISWF) algorithm, enabling fully distributed latent-space alignment across cognitive semantic users without centralized coordination.
\item[\textit{iv)}] We establish sufficient conditions for existence, uniqueness, and global convergence to a pure-strategy Nash equilibrium. We also derive an explicit relationship between semantic alignment properties and the cross-coupling interactions among users, proving that the convergence of the semantic game is inherently governed by both the semantic geometry of the latent representations and the physical characteristics of the wireless channels.
\item[\textit{v)}] We assess the proposed framework on image classification and reconstruction tasks using heterogeneous semantic models, demonstrating the effectiveness of distributed semantic alignment under interference-limited conditions. Numerical results highlight the trade-offs among semantic compression, task performance, interference mitigation, and cognitive spectrum access, while validating the proposed low-complexity distributed implementation.
\end{itemize}
\textbf{Outline.} The remainder of this paper is organized as follows. Section~\ref{sec: system_model} introduces the considered cognitive semantic communication model, and Section~\ref{sec:Problem formulation} formulates the SCE problem under power and interference constraints. Section~\ref{sec:semantic games} recast the problem within a game-theoretic framework and derives an equivalent lower-dimensional semantic power-allocation game. Section~\ref{sec:ISWF} presents the proposed Iterative Semantic Water-Filling algorithm, analyzing its convergence properties and establishing sufficient conditions for uniqueness of the Nash equilibrium. Finally, Section~\ref{sec:Numerical_results} provides numerical results, and Section~\ref{sec:conclusions} draws the conclusions.\\
\textbf{Notation.} Scalar, column vector, and matrix variables are respectively indicated by plain letters $a$ (A), bold lowercase letters $\mathbf{a}$, and bold uppercase letters $\mathbf{A}$. 
The $n$-th component of a vector is indicated by $[a]_n$. 
We will refer to sets with calligraphic uppercase letters $\mathcal{A}$. The range space and the null space are denoted respectively by $\mathcal{R}(\cdot)$ and $\mathcal{N}(\cdot)$, the $n$-th eigenvalue of a matrix $\A$ is denoted by $\text{eig}_n(\A)$. 
Additional notation is introduced as needed throughout the manuscript.

\section{System Model}
\label{sec: system_model}
We consider a multi-user interference environment composed by a set of $L$ semantic-aware transmitter-receiver (tx-rx) pairs, indexed by $l$, acting as secondary users and sharing the same set of physical resources as time, frequency and space, and a set $\mathcal{P}$ of primary users owning these resources. Each secondary tx-rx pair is endowed with pre-trained DNNs to encode and decode semantic information, respectively, with the common goal to execute a down-stream task at receiver side. The described communication system coexist without direct cooperation, and no centralized master node or authority is assumed to handle the network access for secondary users. Let $\mathcal{D}$ be a shared dataset accessible to all tx-rx pairs, and $\s_{T_{l}} \!\in\! \mathbb{R}^{d_l}$ denote the semantic feature vector extracted by the $l$-th transmitter from a data point $\z \!\in\! \mathbb{R}^q$, via a DNN backbone pre-trained function.
\begin{figure*}[t]
    \centering
    \includegraphics[
        width=0.75\textwidth,
        height=0.38\textheight,
        keepaspectratio,
        trim=0.55cm 3.75cm 0.35cm 3.45cm,
        clip
    ]{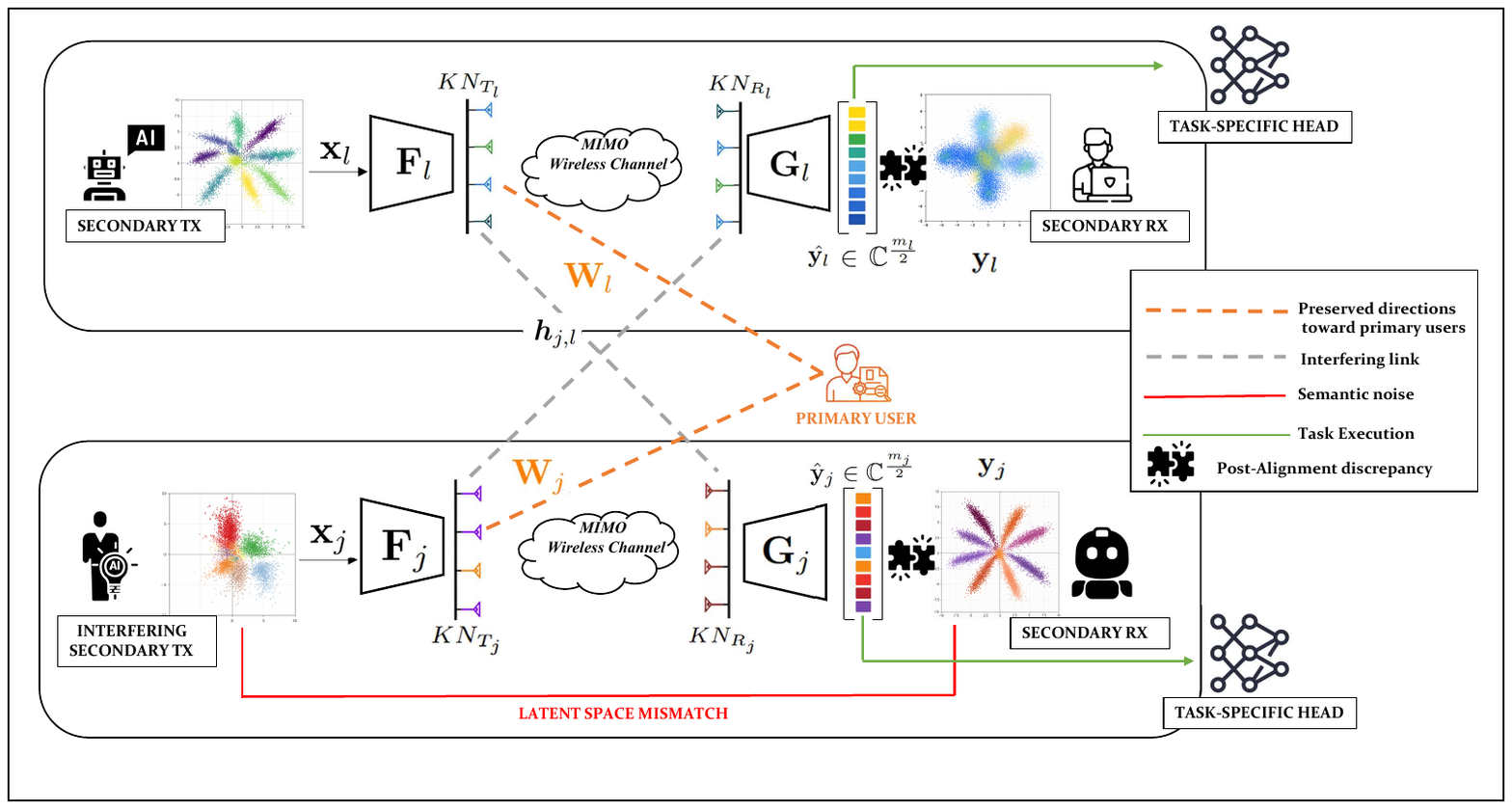}
    \caption{\small Pictorial overview of the proposed system model.}
    \label{fig:System_Model}
\end{figure*}
The collection of latent vectors $\s_{T_l}$ generated from all $\z \in \mathcal{D}$ defines the transmitter semantic latent space, which captures the internal representation used to map raw input data into lower-dimensional task-oriented features. Each $l$-th receiver relies on its own latent space structure, which differs from that of its intended transmitter, and must therefore be properly aligned to ensure reliable task performance. Furthermore, simultaneous transmissions introduce multi-user interference (MUI), which degrades the quality of the transmitted latent information. The objective is to maximize the alignment between the latent spaces of each tx-rx secondary user pair, while accounting for both (semantic and physical) channel noise and MUI, without generating interference to primary users owning the spectrum.\\
The proposed semantic equalization scheme at the $l$-th transmitter is composed of the following steps. First, assuming w.l.o.g. that $d_l$ is even, we proceed by pairing the first half of the semantic features in $\mathbf{s}_{T_l}\!\in \!\mathbb{R}^{d_l}$ with the second half to form complex symbols, yielding an input vector $\x_l \!\in\! \mathbb{C}^\frac{d_l}{2}$. Similarly to analog deep joint source-channel coding (DJSCC) schemes~\cite{bourtsoulatze2019deep,gunduz2022beyond}, latent representations are transmitted directly over the wireless channel without conventional digital modulation and channel coding, enabling end-to-end semantic-aware communications. Then, we exploit a semantic pre-equalizer $f_l(\cdot)$ that jointly performs semantic alignment and feature compression. Specifically, $f_l\!:\! \mathbb{C}^{\frac{d_l}{2}} \!\xrightarrow{}\! \mathbb{C}^{KN_{T_l}}$ performs a learnable transformation that maps the transmitter complex vector $\x_l\!\in\!\mathbb{C}^\frac{d_l}{2}$ into the compressed representation $\overline{\mathbf{x}}_l\in\mathbb{C}^{K N_{T_l}}$, where $N_{T_l}$ is the number of antennas at the $l$-th transmitter, and $K$ represents the number of channel uses. The compression factor resulting from transmitting $K$ MIMO symbols from the original $d_l/2$ complex values is given by
$\xi_l \!=\! \frac{K}{d_l/2}$. The compressed vectors $\overline{\mathbf{x}}_l$ are transmitted over $K$ channel uses through a flat-fading MIMO channel $\overline{\mathbf{H}}_{l,l} \in \mathbb{C}^{N_{R_l} \times N_{T_l}}$, where $N_{R_l}$ is the number of antennas at the $l$-th receiver. 
In addition, concurrent transmission causes interference to the other users through the cross-link MIMO channels, i.e., $\mathbf{H}_{j,l}$, for all $j \!\neq\! l$. Eventually, at the rx-side, a post-equalization function maps the received symbols into a complex vector $\hat{\y_l} \!\in\! \mathbb{C}^{\frac{m_l}{2}}$ via learnable transformation $ g_l \!:\! \mathbb{C}^{KN_{R_l}} \!\xrightarrow{}\! \mathbb{C}^{\frac{m_l}{2}}$.  Overall, the received signal over the $l$-th tx-rx pair can be modeled as:
\vspace{-0.15 cm}
\begin{equation}
  \hat{\mathbf y}_l
  = g_l\!\left(
      \mathbf H_{l,l} f_l(\mathbf x_l)
      + \sum\nolimits_{j \ne l} \mathbf H_{j,l} f_j(\mathbf x_j)
      + \mathbf v_l
    \right)
    \label{eq:received_signal_vectorized}
\end{equation} 
where $\H_{l,l} \!=\! \I_K  \otimes \overline{\H}_{l,l} \in \mathbb{C}^{KN_{R_l} \times KN_{T_l}}$ is the direct channel of link $l$, $\H_{j,l} = \I_K  \otimes \overline{\H}_{j,l}  \in \mathbb{C}^{KN_{R_l} \times KN_{T_j}}$ is the cross-channel matrix between source $j$ and destination $l$, with $\otimes$ denoting the Kronecker product; finally, $\v_l$ is a zero-mean circularly symmetric complex Gaussian noise vector with covariance matrix $\R_{v_l}\!=\!\sigma_{v_l}^2 \I$.
The second term on the right-hand side of $(\ref{eq:received_signal_vectorized})$ represents the MUI perceived by the $l$-th destination and caused by the other active communication links. To ease notation, in the sequel we denote the MUI plus noise (MUIN) term of the $l$-th link in $(\ref{eq:received_signal_vectorized})$ as $$\n_l= \sum\nolimits_{j \ne l} \mathbf H_{j,l} f_j(\mathbf x_j) + \mathbf v_l.$$
At the $l$-th receiver, the complex latent vector  $\hat{\y_l} \in \mathbb{C}^{\frac{m_l}{2}}$ in $(\ref{eq:received_signal_vectorized})$ is then converted into a real vector $\hat{\s}_{R_l}$ of dimension $m_l$, by inverting the halving operation done at transmitter side. Finally, the received signal is processed by a task-specific DNN at the receiver to accomplish a desired task.\\
\noindent\textbf{Semantic channel equalization model.} The semantic mismatch problem can be addressed by minimizing a post-alignment discrepancy metric $d_0(\cdot,\cdot)$ between the latent representation expected at the $l$-th receiver, $\y_l$, and the aligned received signal $\hat{\mathbf y}_l$, namely
\vspace{-0.15 cm}
\begin{equation}\label{eq:d0}
    \min_{\hat{\mathbf y}_l \in \mathcal{S}}
    d_0(\y_l,\hat{\mathbf y}_l)
\end{equation}
subject to the physical communication constraints encoded in the feasible set $\mathcal{S}$. 
The goal is to align the semantic representations exchanged among AI-native devices while respecting the underlying communication constraints. Several alignment metrics can be employed for this purpose, ranging from optimal transport mappings to information-theoretic dissimilarity measures between distributions, such as the Kullback--Leibler divergence or the Hellinger distance \cite{polyanskiy2025information}. In this work, we jointly optimize the learnable transformations $f_l$ and $g_l$ for each unlicensed user $l$, aiming to minimize the semantic discrepancy between the target latent vectors $\mathbf{s}_{R_l}$ and the reconstructed representations $\hat{\mathbf{s}}_{R_l}$ across all $L$ communication links.
To enable a tractable analytical formulation, we consider linear semantic pre-equalizers $\{f_l(\cdot)\}_{l=1}^L$, each represented by a matrix $\mathbf{F}_l \!\in\! \mathbb{C}^{K N_{T_l} \times \frac{d_l}{2}}$, and linear semantic equalizers $\{g_l(\cdot)\}_{l=1}^L$, modeled by matrices $\mathbf{G}_l \!\in\! \mathbb{C}^{\frac{m_l}{2} \times K N_{R_l}}$, for all $l \!=\! 1, \ldots, L$.
To this end, we exploit latent training vectors as \textit{semantic pilots} to enable semantic channel estimation and latent-space alignment~\cite{pandolfo2025latent}. Specifically, semantic pilots consist of paired latent representations extracted from semantically related data samples at transmitter and receiver side, and are used to learn the transformations required to reconcile heterogeneous semantic spaces through the wireless channel. For the $l$-th link, we consider a set of $n$ labeled pilot pairs
$\{(\mathbf{x}_{i,l},\mathbf{y}_{i,l})\}_{i\in\mathcal T_r},$ drawn from the available training dataset $\mathcal T_r$, where $\mathbf{x}_{i,l}$ and $\mathbf{y}_{i,l}$ denote the tx and rx latent representations associated with the same semantic sample.
We make the following assumptions:
\begin{itemize}
    \item[A0)] \textit{Pre-whitening:} latent space vectors $\x_l,\y_l$ are modeled as zero-mean random vectors with weighted covariance matrices $\R_x^{(\boldsymbol\Omega)},\R_y^{(\boldsymbol\Omega)}=\I$, for any diagonal weighting matrix $\boldsymbol\Omega\succ0$. Here, $\boldsymbol\Omega$ weights the semantic pilot samples (cf.~\eqref{eq:nonconvexERM}), emphasizing semantically relevant examples or latent directions. This assumption is without loss of generality, since the corresponding whitening transformation can always be absorbed into the semantic pre-equalizer $\F_l$.
    \label{assumption 0: pre-withening}
    \item[A1)] \textit{Statistically independent representations:} we assume
\begin{equation}
\{\x_l,\y_l\} \independent \x_j \qquad \forall\, l \neq j \in L.
\label{assumption:independence}
\end{equation}
with $\independent$ denoting independence among random variables.
     \item[A2)] \textit{
     Perfect channel state information (CSI):} we assume closed loop systems with perfect CSI,  with a sufficiently long channel coherence time such that channels are considered static across all channel uses. Channel matrices are assumed to be full rank. The effect of imperfect CSI knowledge will be numerically assessed in Sec.\ref{sec:Numerical_results}.
     \item[A3)] 
     \textit{Interference estimation:} each receiver is able to estimate the covariance matrix of the locally perceived interference generated by the other communication links.
    \label{assumption 1: stat. independence}
\end{itemize}
Under these assumptions, the channel model over the $l$-th communication link $(\ref{eq:received_signal_vectorized})$ boils down to:     
\begin{equation}
  \hat{\y}_l
    = \G_l\H_{l,l} \F_l \x_l + \G_l\n_l,
    \quad l=1,\ldots,L.
    \label{eq:received_signal_matrices}
\end{equation}
Furthermore, for each $l$-th transmitter, the total average transmit power is constrained as:
\vspace{-0.15 cm}
\begin{equation}
 \mathbb{E}\{ ||\F_l\x_l||_F^2 \}= \text{tr} \{\F_l\F_l^H \} \leq KP_{\max},\quad \forall l \in L  
\label{eq:power_constraint}
\end{equation}
where $P_{\max}$ is the average power budget per channel use.
Assumption A1 is well motivated in decentralized semantic networks, where heterogeneous users rely on independently trained models and uncoordinated data sources. Assumption A3 further captures cognitive semantic users with sensing capabilities, enabling adaptive transmission while limiting interference toward primary users. Motivated by this setting, we next formulate the semantic alignment problem within a cognitive multi-user communication framework.

\section{Problem formulation}
\label{sec:Problem formulation}

Building on the semantic communication model introduced in the previous section, the semantic channel equalization problem can be formulated as an empirical risk minimization problem under power and interference constraints. For analytical tractability, we adopt the weighted mean squared error (MSE) as post-alignment distance metric $d_0(\cdot,\cdot)$ in (\ref{eq:d0}). Semantic linear transceivers jointly perform semantic compression (decompression), physical-layer equalization, and latent-space alignment at transmitter (receiver) side. To preserve licensed users while enabling spectrum sharing, we also impose the deterministic spatial interference constraint
\begin{equation}
    \W_l^H \F_l \F_l^H = \mathbf{0},
\label{eq:null shaping constraint}
\end{equation}
where $\W_l \in \mathbb{C}^{KN_{T_l} \times r_{W_l}}$ is a full-rank matrix whose columns span the protected spatial directions toward the primary users. Hence, secondary transmissions are forced to lie in the orthogonal complement of the protected subspace, enabling spatial spectrum sharing. The resulting optimization problem is:
\begin{align}\label{eq:nonconvexERM}
\min_{\mathbf{F}_l, \mathbf{G}_l} \;\;
& 
\sum_{i \in \mathcal{T}_r}
\mathbb{E}\left\{
\omega_{i,l}
\left\|
\mathbf{y}_{i,l}
-
\mathbf{G}_l
\left(
\mathbf{H}_{l,l} \mathbf{F}_l \mathbf{x}_{i,l}
+
\mathbf{n}_l
\right)
\right\|_F^2
\right\}
 \\
&\text{s.t.} \quad
 \mathrm{tr}\!\left\{\mathbf{F}_l \mathbf{F}_l^H\right\}
\leq K P_{\max},
\quad
\W_l^H (\F_l \F_l^H)=\mathbf{0}, \notag
\end{align}
for all $l=1,\ldots,L,$ where
$\boldsymbol\Omega_l \triangleq \operatorname{diag}(\boldsymbol\omega_l)
\in \mathbb{R}^{n\times n}$
contains strictly positive coefficients weighting the post-alignment distortion metric. Under A0 and A1, the covariance matrix of the multi-user interference-plus-noise (MUIN) term is given by
\begin{equation}
    \R_{n_l}=
\sum\nolimits_{j \neq l}
    (\H_{j,l} \F_j)
    (\H_{j,l} \F_j)^H
    +
    \sigma_{v_l}^2 \boldsymbol\R_{v_l}^{(\boldsymbol{\Omega_l})}
\label{eq:MUI_covariance}
\end{equation}
where
\begin{equation}
\R_{v_l}^{(\boldsymbol\Omega_l)}
=
\operatorname{tr}(\boldsymbol\Omega_l)\I
\in
\mathbb{C}^{KN_{R_l}\times KN_{R_l}}
\end{equation}
denotes the weighted noise covariance matrix. The covariance in \eqref{eq:MUI_covariance} captures both thermal noise and the aggregate multi-user interference generated by the other active links,
shedding light on the underlying user's interactions. Problem~\eqref{eq:nonconvexERM} is challenging for two main reasons. First, the objective function depends bilinearly on the semantic transceivers $(\F_l,\G_l)$, yielding a non-convex optimization problem. Second, communication links are mutually coupled through the interference term $\mathbf{G}_l\mathbf{n}_l$, since each user’s strategy depends on the semantic transmission policies adopted by the others. These interactions naturally induce a decentralized strategic behavior among semantic users. Motivated by this observation, in the next section we recast the SCE problem as a non-cooperative game, where each communication link selfishly optimizes its semantic transceiver under cognitive interference constraints.

\section{Semantic Games for Latent Space Alignment}
\label{sec:semantic games}

Due to the interference coupling encoded in~\eqref{eq:MUI_covariance}, the SCE problem naturally induces a decentralized strategic interaction among communication links. Motivated by this observation, we reformulate~\eqref{eq:nonconvexERM} as a Nash Equilibrium Problem (NEP), modeling each communication link as a strategic player whose optimization variables depend on the strategies adopted by the other active users. Formally, the $L$ communication links act as players, where each $l$-th user controls its alignment strategy $\boldsymbol{\theta}_l\triangleq[\G_l,\F_l]$, with the semantic pre-equalizer constrained to the local feasible set $\mathcal{Q}_l$, defined as:
\begin{equation}
\mathcal{Q}_l \!\triangleq\!\!
\left\{
\F_l \!\in\! \mathbb{C}^{ K N_{T_l} \times \frac{d_l}{2}} \!:\!
\operatorname{tr}\!\left\{\F_l \F_l^H \right\} \!\le\! K P_{\max},\!
\W_l^H \F_l \F_l^H\!=\!0
\right\}\!.
\label{eq:local feasible set}
\end{equation}
Since \!no \!coupling constraints are present among players, the global feasible \!set \!is \!the \!Cartesian product \!$\mathcal{Q} \!\triangleq\! 
\!\Pi_{l=1}^L
\mathcal{Q}_l
$. Accordingly, the aggregated decision variable is denoted by \!$\boldsymbol{\theta} \!\triangleq\! (\boldsymbol{\theta}_1, \!\dots\!, \boldsymbol{\theta}_L) \!\in\! \mathcal{Q}$.
Each $l$-th player aims to choose a profile strategy that maximizes its payoff function, denoted by $p_l(\cdot)$:
\begin{equation}
\begin{aligned}
\max_{\boldsymbol{\theta}_l \in \mathcal{Q}_l}\;\; & p_l\!\left(\boldsymbol{\theta}_l,\boldsymbol{\theta}_{-l}\right), 
\end{aligned}
\label{eq:NE-problem}
\end{equation}
where \!$\boldsymbol{\theta}_{-l} \!\triangleq\! \bigl(\boldsymbol{\theta}_{1},\, \!\ldots\!,\, \boldsymbol{\theta}_{l-1},\, \boldsymbol{\theta}_{l+1},\, \!\ldots\!,\, \boldsymbol{\theta}_{L}\bigr)$ \!identify the alignment strategies of interfering links. 

An aggregate strategy profile $\boldsymbol{\theta}^{\star}$ is said to be a NE if no player can improve its objective by unilaterally deviating from its own equilibrium strategy $\boldsymbol{\theta}_l^{\star}$,
given that all the other players act according to it \cite{scutari2010convex}. 
Formally, it can be expressed as: 
\begin{equation}
p_l(\boldsymbol{\theta}_l^{\star}, \boldsymbol{\theta}_{-l}^{\star}) \geq p_l(\boldsymbol{\theta}_l, \boldsymbol{\theta}_{-l}^{\star}),  \quad \forall \boldsymbol{\theta}_l \in \mathcal{Q}_l.
    \label{eq: NE_condition}
\end{equation}
Generally, the achievability and convergence to a NE holds only under some conditions. Following \textit{Rosen's theorem} \cite{rosen1965existence}, a game $\mathcal{G}: \langle L,\mathcal{Q}, \{p_l\}_{l \in L} \rangle$ admits at least one (pure) NE if:
\begin{itemize}
    \item[(i)] \textit{for every player, the payoff function is continuously differentiable and concave in $(\boldsymbol{\theta}_l, \boldsymbol{\theta}_{-l})$ given the strategies of other players $\boldsymbol{\theta}_{-l}$ };
    \item[(ii)] \textit{each $l$-th nonempty feasible set $\mathcal{Q}_l$ is compact and convex.}
\end{itemize}

\noindent 
In the considered setting, the objective function in~\eqref{eq:nonconvexERM} naturally defines the local utility of the $l$-th player. However, its non-convex structure hinders the analysis of the resulting NEP. Therefore, in the sequel, we derive an equivalent convex reformulation enabling tractable optimization of semantic linear MIMO transceivers, and guaranteeing the existence of a NE.

\subsection{Pre-Equalizer only Formulation}
\noindent We start observing that, for a fixed $\mathbf{F}_l$, problem~\eqref{eq:nonconvexERM} becomes convex in $\mathbf{G}_l$. Let $\X_l\in\mathbb{C}^{\frac{d_l}{2}\times n}$ denote the matrix that collects all latent transmitter samples $\{\mathbf{x}_{i,l}\}_{i\in\mathcal{T}_r}$, and $\Y_l\in\mathbb{C}^{\frac{m_l}{2}\times n}$ be the matrix containing the corresponding receiver latent column vectors $\{\mathbf{y}_{i,l}\}_{i\in\mathcal{T}_r}$. Then, the optimal equalizer $\mathbf{G}_l$ admits the closed-form solution
\begin{align}
    \G_l^{opt} =
    \Y_l \boldsymbol{\Omega_l} \X_l ^H 
    (\H_{l,l}\F_l)^H  (\H_{l,l}\F_l 
    \F_l^H \H_{l,l}^H + 
    \boldsymbol \R_{n_l})^{-1}
    \label{eq:Wiener_filter}
\end{align}
that is optimal at the $l$-th receiver for any given semantic pre-equalizer $\mathbf{F}_l$. As derived in Appendix~\ref{appA}, exploiting~\eqref{eq:Wiener_filter} and defining the weighted semantic cross-covariance matrix
$\P_l=\Y_l\boldsymbol{\Omega}_l\X_l^H$,
the $l$-th objective term in~\eqref{eq:nonconvexERM} admits the equivalent pre-equalizer-only formulation
\begin{equation}
    \text{MSE}_l(\F_l)\! = \! 
    \text{tr} \{ \R_{y_l}^{(\boldsymbol\Omega_l)}\! 
    -\!  \P_l \P_l^H \!+\!  \P_l(\F_l^H \R_{H_l} \F_l \!+ \!\I)^{-1}\P_l^H\},
\label{eq:objective_MSE(F)_reformulated}
\end{equation}
where $\R_{y_l}^{(\boldsymbol\Omega_l)}$ denotes the $\boldsymbol\Omega_l$-weighted covariance matrix of the receiver latent representations, and
\begin{equation}
\R_{H_l}=\R_{H_l}(\F_{-l})
=
\H_{l,l}^H
\R_{n_l}^{-1}
\H_{l,l}
\in
\mathbb{C}^{KN_{T_l}\times KN_{T_l}}
\end{equation}
is the effective channel covariance matrix induced by the MUIN term. Since $\R_{n_l}$ depends on the interfering pre-equalizers $\F_{-l}$, the matrix $\R_{H_l}(\F_{-l})$ captures the strategic coupling among semantic users.
In the sequel, we omit the explicit dependence on $\F_{-l}$ to ease the notation. Thus, hinging on  $(\ref{eq:objective_MSE(F)_reformulated})$ and omitting constant terms, we can recast $(\ref{eq:nonconvexERM})$ as:
\begin{equation}
\label{eq:non-convex-problem_wrt_F}
\begin{aligned}
\min_{\F_l} \;\;
& 
\operatorname{tr}\!\left\{\left(\F_l^{H} \R_{H_l} \F_l + \I \right)^{-1} \P_l^H \P_l \right\} \\
&\text{s.t.}\quad
 \operatorname{tr}\!\left\{\F_l \F_l^H \right\} \le K P_{\max}, \quad \W_l^H( \F_l \F_l^H)=\mathbf{0} .
\end{aligned}
\end{equation}
Although \eqref{eq:non-convex-problem_wrt_F} depends only on the semantic pre-equalizer $\F_l$, the resulting optimization problem remains non-convex. In particular, the null-interference constraint couples the transmit covariance matrix with the preserved secondary-to-primary spatial directions, making a direct convex reformulation with respect to $\F_l$ challenging. To overcome this issue, in the following we derive an equivalent lower-dimensional formulation that explicitly incorporates the interference constraints into the feasible transmit subspace.

\subsection{Pre-Equalizer design based on Subspace Projections}\label{sec: subspace projections}
Let us introduce the projection operator onto the orthogonal complement of the steering vector matrix $\W_l$, defined as:
\begin{equation}
    \boldsymbol\Pi_{\mathcal{R}(\W_l)^{\perp}} = \I - \W_l (\W_l^H \W_l)^{-1} \W_l^H,
    \label{eq: projector onto W_perp space}
\end{equation}
where $\!\mathcal{R}(\W_l)\!^{\perp}\!=\!\mathcal{N}(\W_l^H)$ specifies the orthogonal complement of the range space of $\W_l$, and $\boldsymbol\Pi_{\mathcal{R}(\W_l)^{\perp}} \!\in\! \mathbb{C}^{KN_{T_l} \times KN_{T_l}}$.
Exploiting \eqref{eq: projector onto W_perp space}, the null-interference constraint in \eqref{eq:non-convex-problem_wrt_F} can be equivalently embedded into the tx covariance structure as
\begin{equation}
    \F_l\F_l^H =
    (\boldsymbol\Pi_{\mathcal{R}(\W_l)^{\perp}} \F_l)
    (\F_l^H \boldsymbol\Pi_{\mathcal{R}(\W_l)^{\perp}}),
    \label{eq: equivalent null_constraint}
\end{equation}
which constrains the transmitted signal covariance to lie entirely within the subspace orthogonal to the protected spatial directions associated with the primary users. By using \eqref{eq: equivalent null_constraint} into the objective of \eqref{eq:non-convex-problem_wrt_F}, and replacing the cross-channel matrices with their projected counterparts $\H_{j,l}^{\perp} = \H_{j,l}\boldsymbol\Pi_{\mathcal{R}(\W_j)^{\perp}}$, the local problem of the $l$-th secondary user admits the following equivalent optimization formulation:
\begin{equation}
\label{eq:projected_non-convex-problem_wrt_F}
\begin{aligned}
\min_{\F_l} \;\;
& 
\operatorname{tr}\!\left\{\left(\F_l^{H} \R_{H_l}^{\perp} \ 
\F_l + \I \right)^{-1} \P_l^H \P_l \right\} \\
&\text{s.t.}\quad
 \operatorname{tr}\!\left\{\F_l \F_l^H \right\} \le K P_{\max},  
\end{aligned}
\end{equation}
where the channel covariance $\R_{H_l}^{\perp} 
\!\in\!\mathbb{C}^{K N_{T_l} \!\times\! K N_{T_l}}$ reads as:
\begin{equation}
\R_{H_l}^{\perp} \!\triangleq\! 
\boldsymbol\Pi_{\mathcal{R}(\W_l)^{\perp}}
( \mathbf{H}_{l,l}^H \mathbf{R}_{{n}_l}^{-1} \mathbf{H}_{l,l}
) 
\boldsymbol\Pi_{\mathcal{R}(\W_l)^{\perp}}.
\label{eq:projected channel covariance}
\end{equation}
Defining the (possibly rank-deficient) matrix 
$\H_{ll}^{\perp}\!\triangleq\! \H_{ll}\boldsymbol\Pi_{\mathcal R(\W_l)^\perp}\!\in\!\mathbb{C}^{KN_{R_l}\!\times\!KN_{T_l}}$, with rank 
$r_{\H_l^\perp}$,
any optimal transmission strategy solving \eqref{eq:projected_non-convex-problem_wrt_F} belongs to the orthogonal complement of the null space of $\H_{ll}^{\perp}$, regardless of the MUIN covariance structure. Accordingly, let
$\H_{l,l}^{\perp}
=
\M_{l,1}\boldsymbol\Psi_{ll}\D_{l,1}^H$
denote the reduced singular value decomposition of $\H_{l,l}^{\perp}$, where
$\D_{l,1}\in\mathbb{C}^{KN_{T_l}\times Kr_{\H_l^{\perp}}}$
is a semi-unitary matrix spanning the orthogonal complement of
$\mathcal{N}(\H_{l,l}^{\perp})$, and
$\boldsymbol\Psi_{ll}\succ0$ collects the non-zero singular values. Then, without loss of optimality, the semantic pre-equalizer can be parameterized as
\[
\F_l=\D_{l,1}\bar{\F}_l
\]
with reduced-dimensional variable
\begin{equation}
\bar\F_l\in\mathcal{\bar Q}_l\triangleq
\left\{
\bar\F_l\in\mathbb{C}^{Kr_{\H_l^\perp}\times d_l/2}
:
\operatorname{tr}\{\bar\F_l\bar\F_l^H\}\leq KP_{\text{max}}
\right\}.
\end{equation}
Substituting the factorization $\F_l=\D_{l,1}\bar{\F}_l$ into
\eqref{eq:projected_non-convex-problem_wrt_F}, we obtain the following equivalent reduced-dimensional problem:
\begin{equation}
\label{eq:lower dimensional non-convex wrt F}
\begin{aligned}
\min_{\bar{\F}_l \in \mathcal{\bar Q}_l}\;\;
& 
\operatorname{tr}\!\left\{
\left(\bar{\F}_l^H \bar{\R}_{H_l}\bar{\F}_l+\I\right)^{-1}
\P_l^H\P_l
\right\},
\end{aligned}
\end{equation}
where
$\bar{\R}_{H_l}
\triangleq
\D_{l,1}^H \R_{H_l}^{\perp}\D_{l,1}
=
\bar{\H}_{l,l}^H\bar\R_{n_l}^{-1}\bar{\H}_{l,l}$, and $\bar{\H}_{l,l}\triangleq \H_{ll}^{\perp}\D_{l,1}.$
By construction, $\bar{\H}_{l,l}\in
\mathbb C^{K N_{R_l}\times r_{\H_l^\perp}}$ has full column rank, which implies $\bar{\R}_{H_l}\succ0$. Similarly, defining the reduced cross-channel matrices as
$\bar\H_{j,l}\triangleq\H_{jl}^{\perp}\D_{j,1}$, the corresponding reduced MUIN covariance matrix is given by
\begin{equation}
    \bar\R_{n_l}
    \triangleq
    \sum\nolimits_{j \neq l}
    (\bar\H_{j,l} \bar\F_j)
    (\bar\H_{j,l} \bar\F_j)^H
    +
    \sigma_{v_l}^2 \operatorname{tr}( \boldsymbol\Omega_l) \I .
\label{eq:MUI_covariance_reduced}
\end{equation}
In the sequel, we denote by $\overline{KN_{T_l}}\triangleq r_{\bar\H_l}$ the rank of the reduced channel $\bar\H_{l,l}$, corresponding to the number of admissible channel eigenmodes available to the $l$-th user.\\
\noindent \textbf{Optimal solution via scalarization.} Interestingly, a closed-form solution to (\ref{eq:lower dimensional non-convex wrt F}) is attainable in the absence of semantic compression, i.e., when 
$\overline{KN_{T_l}} \!=\!d_l/2 $. However, in the more general case $\overline{KN_{T_l}} \! < \! d_l/2$, we must instead rely on an approximate formulation of (\ref{eq:lower dimensional non-convex wrt F}), which yields a closed-form expression for the optimal pre-equalizer. To this aim, let us introduce the following matrix decompositions: 
\begin{equation}
\bar\R_{H_l}\!=\!
\V_{h_l}\mathbf{\Lambda}_{h_l}\V_{h_l}^H \ , \; \quad
\tilde{\P}_l = \tilde{\U}_{p_l}\tilde{\mathbf{\Sigma}}_{p_l}\tilde{\Q}_{p_l}^H
\label{eq:channel_pilot_decomposition},
\end{equation}
where $\!\mathbf{\Lambda}_{h_l}\!\in \!\mathbb{R}^{\overline{KN_{T_l}}\times \overline{KN_{T_l}}}$, 
and $\tilde{\mathbf{P}}_l$ denotes the best rank-$\overline{KN_{T_l}}$ approximation of $\mathbf{P}_l$ in (\ref{eq:non-convex-problem_wrt_F}). Specifically, $\tilde{\U}_{p_l} \!\in\! \mathbb{C}^{\frac{m_l}{2} \times \overline{KN_{T_l}}}$ and $\tilde{\Q}_{p_l} \in \mathbb{C}^{ \frac{d_l}{2} \times \overline{KN_{T_l}}}$ in (\ref{eq:channel_pilot_decomposition}) are semi-unitary singular vector matrices, 
executing the best rank approximation
over the first $\overline{KN_{T_l}}$ orthonormal columns, encoding the principal directions of the semantic pilots cross-covariance. Clearly, if $\overline{KN_{T_l}} \!=\! d_l/2$, we have $\tilde{\mathbf{P}}_l\!=\!\P$. Now, leveraging the decompositions in (\ref{eq:channel_pilot_decomposition}), we can design the semantic pre-equalizer $\bar\F_l \in \mathbb{C}^{\overline{KN_{T_l}} \times \frac{d_l}{2}}$ solving (\ref{eq:lower dimensional non-convex wrt F}) as\footnote{This structure follows from jointly diagonalizing the effective channel covariance $\bar\R_{H_l}$ and the semantic cross-covariance $\P_l^H\P_l$, where $\V_{h_l}$ identifies the admissible channel eigenmodes and $\tilde{\Q}_{p_l}$ the principal semantic alignment directions. The diagonal matrix $\mathbf{\Phi}_l$ then allocates power according to semantic relevance and channel quality.}:
\begin{align}
    \bar\F_l = \V_{h_l} \operatorname{diag}(\sqrt{\mathbf{\Phi}_l}) \tilde{\Q}^H_{p_l} 
\label{eq:F_definition}
\end{align}
where \!$\boldsymbol{\Phi}_l \!\in\! 
\mathbb{R}^{\overline{KN_{T_l}} \!\times \overline{KN_{T_l}}}$
\!is \!a diagonal matrix that specifies the power allocation across the $l$-th transmitter's antennas\footnote{
Without any loss of generality, we consider the coefficients \!$\operatorname{diag}(\boldsymbol{\sqrt{\Phi_l}})$ as real,
since any phase shift can be incorporated in the columns of \!$\V_{h_l}$.}, i.e., $\boldsymbol{\varphi}_l \!\triangleq\! \text{diag}(\boldsymbol{\Phi}_{l})\!=\![\varphi_{l,1},\!\ldots\!,\varphi_{l,\overline{KN_{T_l}}}]^T$. Finally, exploiting (\ref{eq:F_definition}) and substituting $\P_l$ with $\tilde{\P}_l$ in (\ref{eq:channel_pilot_decomposition}), the objective of \eqref{eq:lower dimensional non-convex wrt F} can be approximated as: 
\begin{equation}
\mathrm{MSE}_l(\boldsymbol{\Phi}_l)
\!\approx\! 
\operatorname{tr}\!\left\{\!(\mathbf I + \tilde{\mathbf Q}_{p_l}\,
\mathbf{\Phi}_l
\boldsymbol{\Lambda}_{h_l}\,\tilde{\mathbf Q}_{p_l}^H)^{-1}
\big(\tilde{\mathbf Q}_{p_l}\,\tilde{\boldsymbol{\Sigma}}^2_{p_l}\,\tilde{\mathbf Q}_{p_l}^H\big)\!\right\}\!.
\label{eq:mse_Phi_diagonal}
\end{equation}
Now, invoking the matrix inversion lemma and the cyclic invariance of the trace, $(\ref{eq:mse_Phi_diagonal})$ admits a full diagonal reformulation in $\mathbf{\Phi}_l$, leading to the equivalent power allocation problem:
\vspace{-0.10 cm}
\begin{align} 
\max_{ \mathbf{\Phi}_l} & \quad 
\operatorname{tr} \Bigl\{  \Bigl[  (
\mathbf{\Phi}_l
\boldsymbol{\Lambda}_{h_l} )^{-1} + \I \Bigr]^{-1} \tilde{\mathbf{\Sigma}}^2_{p_l}\Bigr\} 
\notag \\
    \text{s.t.} & \quad \text{tr}\{ \mathbf{\Phi}_l 
    \} \leq KP_{\max}.
\label{eq:diagonal_Phi_optimization_problem}
\end{align}
Leveraging this decomposition, the $l$-th payoff function in~\eqref{eq:diagonal_Phi_optimization_problem} admits a fully diagonal pre-equalizer-only formulation. In this setting, the semantic and physical channel eigenmodes are jointly exploited to determine the optimal transmission directions over the MIMO interference channel. The resulting alignment strategy lies in the semantic subspace spanned by $\tilde\Q_{p_l}^H$ and incorporates the null-interference projection through the isometry $\V_{h_l}$, while the optimization variables reduce to the per-mode power allocation vector $\boldsymbol\varphi_l=\operatorname{diag}(\boldsymbol{\Phi}_l)$. Interestingly, the proposed reformulation allows us to recast the matrix-valued semantic channel equalization problem in~\eqref{eq:nonconvexERM} into a lower-dimensional \textit{vector power-control game}, where each player optimizes its transmit power allocation over the admissible semantic--channel eigenmodes. Specifically, defining
$\operatorname{diag}(\boldsymbol{\Lambda}_{h_l})
=
[\lambda_{l,1},\ldots,\lambda_{l,\overline{KN_{T_l}}}]^T$
and
$\operatorname{diag}(\tilde{\boldsymbol{\Sigma}}_{p_l})
=
[\tilde{\sigma}_{l,1},\ldots,\tilde{\sigma}_{l,\overline{KN_{T_l}}}]^T$,
the scalarized optimization problem associated with~\eqref{eq:diagonal_Phi_optimization_problem} reduces to the maximization of the concave payoff function
\begin{equation}
\begin{aligned}
\boldsymbol{\varphi}^{*}_l
=
\arg\max_{\boldsymbol{\varphi}_l}\;
p_l(\boldsymbol{\varphi}_l)
=
\sum_{m=1}^{\overline{KN_{T_l}}}
\frac{
(\varphi_{l,m}\lambda_{l,m})
\tilde{\sigma}^{2}_{l,m}
}{
\varphi_{l,m}\lambda_{l,m}+1
}
\quad\\
\text{s.t. }\;
\varphi_{l,m}\ge 0,\ \forall m,
\qquad
\boldsymbol{1}^T \boldsymbol{\varphi}_{l}\le K\,P_{\max},
\end{aligned}
\label{eq:power-opt}
\end{equation}
for all $l=1,\ldots,L$. The reformulation in~\eqref{eq:power-opt} reveals that semantic alignment is entirely characterized by the singular values of the semantic cross-covariance matrix $\tilde{\P}_l$, which quantify the strength of the shared latent subspaces between tx and rx representations along each admissible semantic--channel eigenmode. To exclude degenerate semantic directions and ensure that every admissible mode contributes to the alignment process, we introduce the following mild assumption:
\begin{itemize}
    \item[A4)] The semantic cross-covariance matrix $\tilde{\P}_l$ is full rank, i.e.,
    $\tilde{\sigma}_{l,m}>0$
    for
    $1\leq m\leq \overline{KN_{T_l}}$.
    \smallskip
\end{itemize}
Assumption A4 is naturally justified in decentralized semantic systems where transmitter and receiver employ independently trained latent spaces. In this setting, the semantic pilots induce statistically distinct yet semantically correlated latent representations, yielding a non-degenerate cross-covariance structure. Thus, all admissible alignment modes carry meaningful semantic information across the reduced latent subspace. Finally, under A4, the problem in~\eqref{eq:power-opt} induces the concave $N$-person game
$\mathcal{G}_{\mathrm{sem}}
\triangleq
\langle L,\mathcal{\bar Q},\{p_l\}_{l\in L}\rangle$.
Since each feasible set $\mathcal{\bar Q}_l$ is nonempty, compact, and convex, and the payoff function in~\eqref{eq:power-opt} is continuous and concave in $\boldsymbol{\varphi}_l$ for fixed $\boldsymbol{\varphi}_{-l}$, the existence of at least one pure-strategy Nash equilibrium follows directly from Rosen’s theorem for concave games~\cite{rosen1965existence}.\\
\noindent \textbf{Semantic Water-filling Power Allocation.} By Assumption A4, problem \eqref{eq:power-opt} is strictly concave and admits a closed-form solution obtained from the Karush--Kuhn--Tucker (KKT) conditions~\cite{boyd2004convex}. Specifically, the optimal power allocation across the admissible semantic-channel eigenmodes is:
\begin{equation}
  \varphi_{l,m}^{\ast}
  =
  \left[
  \frac{
  \tilde{\sigma}_{l,m}
  }{
  \sqrt{\mu_l\lambda_{l,m}}
  }
  -
  \frac{1}{\lambda_{l,m}}
  \right]_+,
  \qquad
  1\leq m\leq \overline{KN_{T_l}},
\label{eq:closed form solution phi}
\end{equation}
for all $l=1,\ldots,L$, where $[x]_+=\max(0,x)$, and $\mu_l\in\mathbb{R}_+$ denotes the Lagrange multiplier associated with the power constraint. The value of $\mu_l$ can be efficiently computed via bisection to satisfy
$\boldsymbol{1}^T \boldsymbol{\varphi}_{l}\le K\,P_{\max}$. The solution in~\eqref{eq:closed form solution phi} reveals a \textit{semantic water-filling} structure, where each player allocates power according to both the physical-channel gains $\lambda_{l,m}$ and the semantic relevance coefficients $\tilde{\sigma}_{l,m}$. In particular, each user only requires a local estimate of the MUIN covariance matrix in~\eqref{eq:MUI_covariance_reduced} to compute its optimal strategy. Specifically, the semantic game $\mathcal{G}_{\mathrm{sem}}$ admits a water-filling interpretation~\cite{palomar2005practical}, with water level $1/\sqrt{\mu_l}$ and mode-dependent weights
$\mathbf{s}_l \triangleq \tilde{\boldsymbol\sigma}_l/\sqrt{\boldsymbol\lambda_l}\in\mathbb{R}^{\overline{KN_{T_l}}}$.
Defining the normalized allocation
$\tilde{\boldsymbol\varphi}_l=\boldsymbol\varphi_l/\mathbf{s}_l$,
the power constraint becomes
$\sum_{m=1}^{\overline{KN_{T_l}}}\tilde{\varphi}_{l,m}s_{l,m}\le KP_{\max}$,
where the coefficients $s_{l,m}$ can be interpreted as the width of the water allocated over each semantic--channel eigenmode, as illustrated in Fig.~\ref{fig:water-filling illustration}. Interestingly, these weights jointly depend on the semantic alignment structure through $\tilde{\boldsymbol\sigma}_l$ and on the perceived interference through $\boldsymbol\lambda_l$, explicitly coupling the semantic and physical communication layers.

\begin{figure}[!t]
    \centering
\includegraphics[width=0.99\linewidth]{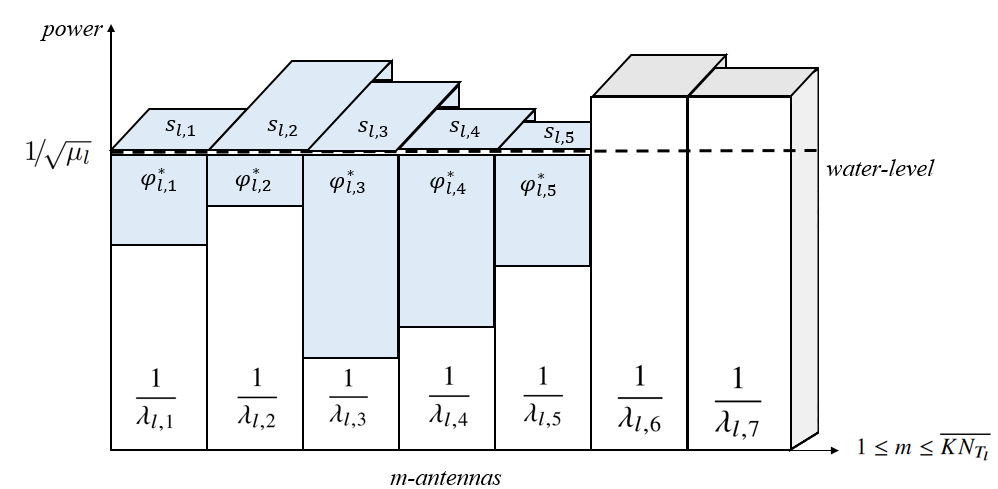}
\caption{Illustration of \textit{Semantic Water-Filling}. Bin heights represent $\boldsymbol\lambda_l^{-1}$; the water level is given by $\tfrac{1}{\sqrt{\mu_l}}$, satisfying the total average power constraint $KP_{\max}$ and weighted by the environment depended water-width $\s_l(\boldsymbol\varphi_{-l})$. The blue area shows the allocated power $\boldsymbol{\varphi}_l^\star$ across $m$-th transmitter antennas, over the space-time domain.}
    \label{fig:water-filling illustration}
\end{figure}

\section{Variational Inequality Reformulation}
\label{sec: VI reformulation}

The semantic game
$\mathcal{G}_{\mathrm{sem}}
\triangleq
\langle L,\mathcal{\bar Q},\{p_l\}_{l\in L}\rangle$
admits an equivalent reformulation as the partitioned variational inequality problem $\mathbf{VI}(\mathcal{\bar Q},\mathbf{T})$~\cite{scutari2010convex}. This reformulation is particularly useful because it enables the analysis of equilibrium existence, uniqueness, and convergence of distributed best-response dynamics through the structural properties of the pseudo-gradient mapping $\mathbf{T}$. Specifically, the goal is to find
$\boldsymbol{\varphi}^{\star}\in\mathcal{\bar Q}$
such that
\[
(\boldsymbol{\varphi}-\boldsymbol{\varphi}^{\star})^T
\mathbf{T}(\boldsymbol{\varphi}^{\star})
\geq 0,
\qquad
\forall\,\boldsymbol{\varphi}\in\mathcal{\bar Q},
\]
which directly follows from the first-order optimality conditions of the optimization problems~\cite{scutari2010convex}. The pseudo-gradient operator
$\mathbf{T}:\mathcal{\bar Q}\to\mathbb{R}^{L\overline{KN_{T_l}}}$
is defined block-wise as
\begin{equation}\label{eq:pseudogradient}
\mathbf{T}(\boldsymbol{\varphi})
\triangleq
\bigl(
-\nabla_{\boldsymbol{\varphi}_l}
p_l(\boldsymbol{\varphi})
\bigr)_{l=1}^L,    
\end{equation}
whose $m$-th component is given by
\begin{equation}
[\T_l(\boldsymbol\varphi)]_m
=
\frac{
\tilde{\sigma}_{l,m}^{2}\lambda_{l,m}(\boldsymbol\varphi_{-l})
}{
\left(
1+\varphi_{l,m}\lambda_{l,m}(\boldsymbol\varphi_{-l})
\right)^2
},
\label{eq:pseudo_graident VI}
\end{equation}
for all
$1\leq m\leq\overline{KN_{T_l}}$. In particular, strong monotonicity, or more generally the uniformly $P$ property, plays a key role in establishing uniqueness of the NE and convergence of distributed best-response dynamics~\cite{scutari2012monotone,facchinei2003finite}. Specifically, $\mathbf T$ is \emph{strongly monotone} on $\bar{\mathcal Q}$ if there exists $c_{sm}>0$ such that
\begin{equation}
(\boldsymbol{\varphi}-\mathbf z)^T
\bigl(
\mathbf T(\boldsymbol{\varphi})
-
\mathbf T(\mathbf z)
\bigr)
\ge
c_{sm}\|\boldsymbol{\varphi}-\mathbf z\|^2,
\label{eq:strong_monotonicity_definition}
\end{equation}
for all $\boldsymbol{\varphi},\mathbf z\in\bar{\mathcal Q}$.
More generally,
$\mathbf T=(\mathbf T_l)_{l=1}^L$
is a \emph{uniformly $P$-function} if there exists $c_{uP}>0$ such that
\begin{equation}
\max_{l\in L}\;
(\boldsymbol{\varphi}_l-\mathbf z_l)^T
\Bigl(
\mathbf T_l(\boldsymbol{\varphi})
-
\mathbf T_l(\mathbf z)
\Bigr)
\ge
c_{uP}\|\boldsymbol{\varphi}-\mathbf z\|^2,
\label{eq:uniformly_P_function_definition}
\end{equation}
for all $\boldsymbol{\varphi},\mathbf z\in\bar{\mathcal Q}$. Strong monotonicity implies the uniformly $P$ property, and both guarantee uniqueness of the solution of $\mathrm{VI}(\bar{\mathcal Q},\mathbf T)$~\cite{scutari2012monotone}. These properties can be characterized through the differential structure of the pseudo-gradient mapping $\mathbf T$, and, in particular, through suitable positivity and diagonal dominance conditions on its Jacobian operator~\cite{scutari2012monotone}. To this end, we characterize the Jacobian of $\mathbf T$, which quantifies how the strategy adopted by each player affects the utility gradients of the others through the semantic interference coupling:
\begin{equation}\label{eq:Jacobian}
J\mathbf T(\boldsymbol{\varphi})
\triangleq
\bigl(
J_{\boldsymbol{\varphi}_j}\mathbf T_l(\boldsymbol{\varphi})
\bigr)_{l,j=1}^L
\in
\mathbb R^{L\overline{KN_{T_l}}\times L\overline{KN_{T_l}}},
\end{equation}
whose block structure is given by
\begin{equation}
J_{\boldsymbol{\varphi}_j}\!\T_l(\boldsymbol{\varphi})\!
\!=\!
\begin{cases}
\displaystyle
\!\operatorname{diag}\!\!\left(
\frac{2\ \tilde{\sigma}_{l,m}^{\,2}\lambda_{l,m}^{2}}
{\bigl(1+\varphi_{l,m}\lambda_{l,m}\bigr)^3}
\right) \!\succ\! 0,
\!&\! j=l,\\[2ex]
\displaystyle
\!- 
\!\operatorname{diag}\!\!\left(
\tilde{\sigma}_{l,m}^{\,2}
\frac{1-\varphi_{l,m}\lambda_{l,m}} 
{\bigl(1+\varphi_{l,m}\lambda_{l,m}\bigr)^3} \right) 
\!\left[
\frac{\partial \lambda_{l,m}}{\partial \varphi_{j,n}} \right]\! 
\!&\! j\neq l.
\end{cases}
\label{eq: Jacobian structure}
\end{equation}
The diagonal blocks correspond to the Hessian of the local payoff functions and are positive definite due to strict concavity. Conversely, the off-diagonal blocks capture the interference coupling among semantic users through the sensitivity terms
$\frac{\partial\lambda_{l,m}}{\partial\varphi_{j,n}}$,
which measure how the effective channel eigenmodes vary with the rival players' strategies. To quantify these interactions, we next derive spectral and sensitivity bounds for the reduced effective channel covariance matrix.
\begin{theorem}
Let
$\boldsymbol\lambda_{l}
=
\boldsymbol\lambda_{l}(\bar\R_{H_l})
\in
\mathbb{R}^{\overline{KN_{T_l}}}$
denote the eigenvalues of $\bar\R_{H_l}$, and define
\begin{equation}\label{eq:Rnhat}
\hat{\R}_{n_l}
\triangleq
K P_{\max}
\sum\nolimits_{j \ne l}
\bar\H_{j,l}\bar\H_{j,l}^{H}
+
\sigma_{v_l}^2
\operatorname{tr}(\boldsymbol\Omega_l)\I.    
\end{equation}
Then, for all
$1\!\leq\! m\!\leq\! \overline{KN_{T_l}}$,
it holds:
\vspace{-.3 cm}
\begin{equation}
\operatorname{eig}_{l,m}
\!\left(
\bar\H_{l,l}^{H}
\hat\R_{n_l}^{-1}
\bar\H_{l,l}
\right)
\le
\lambda_{l,m}(\bar\R_{H_l})
\le
\operatorname{eig}_{l,m}
\!\left(
\frac{
\bar\H_{l,l}^{H}\bar\H_{l,l}
}{
\sigma_{v_l}^2
\operatorname{tr}(\boldsymbol\Omega_l)
}
\right).
\label{eq:R_H spectral bound}
\end{equation}
Moreover, for every $j\neq l$ and
$1\le n\le \overline{KN_{T_j}}$,
the sensitivity of
$\lambda_{l,m}(\bar\R_{H_l})$
with respect to the $j$-th player's power allocation satisfies
\begin{equation}
\left|
\frac{
\partial\lambda_{l,m}
}{
\partial\varphi_{j,n}
}
\right|
\le
\frac{
\|\bar\H_{l,l}\|_2^2
\|\bar\H_{j,l}\|_2^2
}{
\sigma_{v_l}^4
\operatorname{tr}(\boldsymbol\Omega_l)^2
}.
\label{eq: sensitivity bound Lemma}
\end{equation}
\label{lemma: R_H spectral properties}
\end{theorem}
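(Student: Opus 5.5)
The plan has two parts: a Loewner-order sandwich for the eigenvalue bounds, and a derivative of the inverse combined with Weyl's perturbation inequality for the sensitivity bound.

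\textbf{Spectral bounds.} Start from the reduced MUIN covariance $\bar\R_{n_l}$ in \eqref{eq:MUI_covariance_reduced}. The aim is to sandwich it in the Loewner order,
\[
\sigma_{v_l}^2\operatorname{tr}(\boldsymbol\Omega_l)\I \preceq \bar\R_{n_l} \preceq \hat\R_{n_l}.
\]
The lower bound holds because each interference term $(\bar\H_{j,l}\bar\F_j)(\bar\H_{j,l}\bar\F_j)^H$ is positive semidefinite. For the upper bound, note that $\bar\F_j\bar\F_j^H \preceq \lambda_{\max}(\bar\F_j\bar\F_j^H)\I \preceq \operatorname{tr}(\bar\F_j\bar\F_j^H)\I \preceq KP_{\max}\I$, using feasibility $\bar\F_j\in\bar{\mathcal Q}_j$. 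Congruence with $\bar\H_{j,l}$ then gives $\bar\H_{j,l}\bar\F_j\bar\F_j^H\bar\H_{j,l}^H \preceq KP_{\max}\bar\H_{j,l}\bar\H_{j,l}^H$.

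Matrix inversion is operator monotone decreasing on positive definite matrices, so the sandwich reverses:
\[
\hat\R_{n_l}^{-1}\preceq \bar\R_{n_l}^{-1}\preceq \big(\sigma_{v_l}^2\operatorname{tr}(\boldsymbol\Omega_l)\big)^{-1}\I .
\]
A second congruence with $\bar\H_{l,l}$ preserves the ordering. Since Loewner ordering implies ordering of the sorted eigenvalues (Courant--Fischer / Weyl monotonicity), we obtain \eqref{eq:R_H spectral bound} for every $m$, with eigenvalues indexed consistently in decreasing order.

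\textbf{Sensitivity.} With $\bar\F_j=\V_{h_j}\operatorname{diag}(\sqrt{\boldsymbol\varphi_j})\tilde\Q_{p_j}^H$ and $\tilde\Q_{p_j}$ semi-unitary, we have $\bar\F_j\bar\F_j^H=\sum_n\varphi_{j,n}\v_{j,n}\v_{j,n}^H$, where $\v_{j,n}$ are the unit-norm columns of $\V_{h_j}$. Hence
\[
\frac{\partial\bar\R_{n_l}}{\partial\varphi_{j,n}}=\bar\H_{j,l}\v_{j,n}\v_{j,n}^H\bar\H_{j,l}^H ,
\]
treating $\V_{h_j}$ as fixed; that is, the derivative is taken with player $j$'s eigenbasis held fixed, consistently with how the Jacobian \eqref{eq: Jacobian structure} is written. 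The derivative of the inverse is
\[
\partial\bar\R_{n_l}^{-1}=-\bar\R_{n_l}^{-1}\,(\partial\bar\R_{n_l})\,\bar\R_{n_l}^{-1},
\]
so that
\[
\frac{\partial\bar\R_{H_l}}{\partial\varphi_{j,n}}=-\bar\H_{l,l}^H\bar\R_{n_l}^{-1}\bar\H_{j,l}\v_{j,n}\v_{j,n}^H\bar\H_{j,l}^H\bar\R_{n_l}^{-1}\bar\H_{l,l}.
\]
Its spectral norm is at most $\|\bar\H_{l,l}\|_2^2\|\bar\R_{n_l}^{-1}\|_2^2\|\bar\H_{j,l}\|_2^2$, and $\|\bar\R_{n_l}^{-1}\|_2\le(\sigma_{v_l}^2\operatorname{tr}(\boldsymbol\Omega_l))^{-1}$ by the first part.

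\textbf{Main obstacle.} The delicate step is passing from the matrix derivative to the eigenvalue derivative, since eigenvalues need not be differentiable at multiplicities. I would avoid differentiability issues by using Weyl's perturbation inequality: $|\lambda_m(\A+\mathbf E)-\lambda_m(\A)|\le\|\mathbf E\|_2$ for Hermitian matrices. This makes each $\lambda_{l,m}$ Lipschitz along $\varphi_{j,n}$ with constant equal to the supremum of the derivative norm bounded above. The bound \eqref{eq: sensitivity bound Lemma} then holds for the derivative wherever it exists (and for Dini derivatives otherwise). Alternatively, at simple eigenvalues, the Hellmann--Feynman formula $\partial\lambda=\mathbf u^H(\partial\bar\R_{H_l})\mathbf u$ with unit eigenvector $\mathbf u$ gives the same bound directly.
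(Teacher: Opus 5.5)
Your proposal is correct and follows the paper's proof essentially step by step. For the spectral bounds, both use the Loewner sandwich $\sigma_{v_l}^2\operatorname{tr}(\boldsymbol\Omega_l)\I \preceq \bar\R_{n_l} \preceq \hat\R_{n_l}$, order reversal under inversion, congruence with $\bar\H_{l,l}$ and eigenvalue monotonicity; for the sensitivity, both use the rank-one derivative $\bar\H_{j,l}\boldsymbol\nu_{h_j,n}\boldsymbol\nu_{h_j,n}^H\bar\H_{j,l}^H$, the derivative-of-inverse formula and submultiplicative norm bounds. The differences are minor: you get $\bar\F_j\bar\F_j^H\preceq KP_{\max}\I$ from the trace rather than the explicit eigenstructure, and you back the paper's Hellmann--Feynman step with a Weyl-inequality Lipschitz argument, which also covers repeated eigenvalues where $\lambda_{l,m}$ need not be differentiable.
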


\begin{proof}
See Appendix~\ref{app: Lemma Spectral Bound}.
\end{proof}
Theorem~\ref{lemma: R_H spectral properties} quantifies the interference coupling among semantic users through the sensitivity of the effective channel eigenmodes to the rival players' strategies. These bounds will be instrumental, in the next section, to derive sufficient conditions under which the pseudo-gradient Jacobian in~\eqref{eq:Jacobian} is diagonally dominant, thereby ensuring that the pseudo-gradient mapping in \eqref{eq:pseudogradient} is a uniformly $P$-function and, consequently, that the game admits a unique Nash equilibrium reached by distributed best-response dynamics. Such conditions identify operating regimes where the self-effect of each player dominates the interference induced by the others.

\section{Iterative Semantic Water-filling}
\label{sec:ISWF}

The closed-form solution in~\eqref{eq:closed form solution phi} naturally induces a decentralized best-response dynamics, where each semantic user updates its power allocation according to the interference generated by the other active links. Since each player only requires a local estimate of the MUIN covariance matrix in~\eqref{eq:MUI_covariance_reduced}, the resulting strategy admits a fully distributed implementation. The obtained allocation exhibits a semantic water-filling structure~\cite{palomar2005practical}, jointly balancing semantic relevance through $\tilde{\boldsymbol{\sigma}}_{l}$ and channel quality through $\boldsymbol{\lambda}_{l}$ across the admissible semantic--channel eigenmodes. In the following, we introduce the proposed Iterative Semantic Water-Filling algorithm and analyze its convergence properties.

\subsection{Distributed Algorithmic Solution}

Let $B_l(\boldsymbol{\varphi}_{-l})$ denote the best-response mapping of the $l$-th player induced by the semantic water-filling in~\eqref{eq:closed form solution phi}, namely
\begin{equation}
\label{eq:BR_definition}
B_l(\boldsymbol{\varphi}_{-l})
\triangleq
\left\{
\boldsymbol{\varphi}_l \in \mathcal{\bar Q}_l
\;\middle|\;
p_l(\boldsymbol{\varphi}_l,\boldsymbol{\varphi}_{-l})
\ge
p_l(\boldsymbol{\z}_l,\boldsymbol{\varphi}_{-l}),
\;
\forall \boldsymbol{\z}_l \in \mathcal{\bar Q}_l
\right\}.
\end{equation}
Defining the aggregate best-response operator
\[
B(\boldsymbol{\varphi})
\triangleq
B_1(\boldsymbol{\varphi}_{-1})
\times\cdots\times
B_L(\boldsymbol{\varphi}_{-L}),
\]
a strategy profile
$\boldsymbol{\varphi}^{\star}\in\mathcal{\bar Q}$
is a pure-strategy Nash equilibrium of $\mathcal{G}_{\mathrm{sem}}$
if and only if it is a fixed point of the distributed best-response dynamics, i.e.,
$\boldsymbol{\varphi}^{\star}\in B(\boldsymbol{\varphi}^{\star})$~\cite{facchinei2003finite,scutari2010convex}. Accordingly, the distributed semantic alignment dynamics can be written as the nonlinear fixed-point iteration
\begin{equation}
   \boldsymbol{\varphi}^{(t+1)}
   =
   B(\boldsymbol{\varphi}^{(t)}),
   \qquad
   t\in\mathbb{N}_{+},
\label{eq:BR dynamics}
\end{equation}
whose equilibrium coincides with the Nash equilibrium of the semantic game~\cite{combettes2021fixed}. Depending on the adopted update policy, two distributed implementations can be considered~\cite{bertsekas2015parallel}. In the Gauss--Seidel scheme, users update sequentially:
\begin{equation}
\hat{\boldsymbol{\varphi}}_{l}
=
B_{l}\!\left(
\boldsymbol{\varphi}_{1}^{\,t+1},\ldots,
\boldsymbol{\varphi}_{l-1}^{\,t+1},
\boldsymbol{\varphi}_{l+1}^{\,t},\ldots,
\boldsymbol{\varphi}_{L}^{\,t}
\right),
\label{eq:gauss_seidel_scheme}
\end{equation}
while, in the Jacobi scheme, all users update simultaneously:
\begin{equation}
\hat{\boldsymbol{\varphi}}_{l}
=
B_{l}\!\left(
\boldsymbol{\varphi}_{1}^{\,t},\ldots,
\boldsymbol{\varphi}_{l-1}^{\,t},
\boldsymbol{\varphi}_{l+1}^{\,t},\ldots,
\boldsymbol{\varphi}_{L}^{\,t}
\right).
\label{eq:jacobi_scheme}
\end{equation}
To improve the stability of the distributed dynamics, we further employ a Krasnosel'skii--Mann relaxation step~\cite{scutari2010convex}:
\begin{equation}
    \boldsymbol{\varphi}_{l}^{t+1}
    =
    \boldsymbol{\varphi}_{l}^{t}
    +
    \gamma^{t}
    \big(
    \hat{\boldsymbol{\varphi}}_{l}
    -
    \boldsymbol{\varphi}_{l}^{t}
    \big),
    \label{eq:best-response step size}
\end{equation}
with $\gamma^{t}\in(0,1]$, $\gamma^{t}\to0$, and
$\sum_{t=1}^{\infty}\gamma^{t}=+\infty$.
The resulting \textit{Iterative Semantic Water-Filling} (ISWF) algorithm is summarized in Algorithm~\ref{algo: potential game for SCE}. The proposed procedure can be interpreted as a distributed fixed-point iteration solving the nonlinear best-response dynamics in~\eqref{eq:BR dynamics}, whose convergence depends on the interference coupling among semantic users. In the next section, we derive sufficient conditions ensuring convergence of the generated sequence
$\{(\boldsymbol{\varphi}^{(t)}_l)_{l=1}^L\}_{t=1}^{T}$
toward the Nash equilibrium of $\mathcal{G}_{\mathrm{sem}}$.

\begin{algorithm}[!t]
    \caption{Iterative Semantic Water-filling}
\label{alg:zs-sce-2e}
\DontPrintSemicolon
\KwIn{
$ \F_l^{(0)}\sim \mathcal{CN}(0,1)$ s.t. $||\F_l||^2_F=KP_{\max}$} 
\KwOut{Semantic transceivers $\bar\F_l$,$\G_l$} 
\For{$t \gets 1$ in \text{game iterations}}{
\If{$\bar\F_l^{(t)}$ \text{satisfies 
termination criterion}}
{\text{STOP}}
\Else{
   Retrieve $\V_{h_l}(\boldsymbol\varphi_{-l}), \tilde\Q_{p_l}$ defined in \eqref{eq:channel_pilot_decomposition}\;
  \If{$\mathcal{B}_l\text{ scheme is Gauss-Seidel}$}{
  Compute (\ref{eq:closed form solution phi}) with the update rule (\ref{eq:gauss_seidel_scheme})}
   \ElseIf{$\mathcal{B}_l\text{ scheme is Jacobi}$}{
  Compute (\ref{eq:closed form solution phi}) with the update rule (\ref{eq:jacobi_scheme})}
}
Compute $\boldsymbol{\varphi}_l^{(t+1)}$ by (\ref{eq:best-response step size}), and
set $\bar\F_l^{(t+1)}$ as (\ref{eq:F_definition}) \;
Compute $\G_l^{(t+1)}$ by (\ref{eq:Wiener_filter}) \;
}
\Return{$\bar\F_l^{(t+1)}$,$\G_l^{(t+1)}$}\:
\label{algo: potential game for SCE}
\end{algorithm}

\subsection{Convergence Analysis}
As discussed in Section~\ref{sec: VI reformulation}, convergence of the proposed Iterative Semantic Water-Filling algorithm depends on the structural properties of the pseudo-gradient Jacobian $J\T(\boldsymbol{\varphi})$ in~\eqref{eq:Jacobian}. In particular, we seek conditions under which the self-effect of each player dominates the interference induced by the others, leading to diagonal dominance of the Jacobian operator and, consequently, to uniqueness of the Nash equilibrium and convergence of distributed best-response dynamics~\cite{scutari2010convex}. To formalize such diagonal dominance conditions, we introduce suitable lower and upper bounds on the Jacobian blocks in~\eqref{eq: Jacobian structure}. Specifically, define
\begin{equation}
\alpha_l^{\min}
\!\triangleq\!
\inf_{\boldsymbol{\varphi}\in\mathcal{Q}} \text{eig}_{\text{min}}
\!\left(
J_{\boldsymbol{\varphi}_l}\mathbf{T}_l(\boldsymbol{\varphi})
\right),
\!\quad\! 
\!\beta_{lj}^{\max}
\!\triangleq\!
\sup_{\boldsymbol{\varphi}\in\mathcal{Q}}
\left\|
J_{\boldsymbol{\varphi}_j}\mathbf{T}_l(\boldsymbol{\varphi})
\right\|_2
\label{eq:alpha_beta_def}
\end{equation}
and introduce the condensed matrix
$\boldsymbol{\Upsilon}_{\T}\in\mathbb{R}^{L\times L}$:
\begin{equation}
[\boldsymbol{\Upsilon}_{\T}]_{lj}
\triangleq
\begin{cases}
\alpha_l^{\min},
& l=j,
\\[1mm]
-\beta_{lj}^{\max},
& l\neq j.
\end{cases}
\label{eq:Upsilon_Gamma_same_row}
\end{equation}
The matrix $\boldsymbol{\Upsilon}_{\T}$ provides a compact characterization of the worst-case interaction among semantic users encoded in the Jacobian operator. In particular, $\alpha_l^{\min}$ captures the minimum self-effect of player $l$, while $\beta_{lj}^{\max}$ measures the maximum interference coupling induced by player $j$ on player $l$ through the off-diagonal Jacobian blocks. Since the local payoff function in~\eqref{eq:power-opt} is strictly concave, the diagonal terms satisfy $\alpha_l^{\min}>0$ for all $l=1,\ldots,L$. The bounds in~\eqref{eq:alpha_beta_def} follow directly from the Jacobian structure in~\eqref{eq: Jacobian structure} and the sensitivity result in Theorem~\ref{lemma: R_H spectral properties}, as illustrated in the sequel:
\begin{itemize}
\item[(i)] \textit{Diagonal blocks:}
From~\eqref{eq: Jacobian structure} and
$0\leq \varphi_{l,m}\leq KP_{\max}$,
\begin{equation}
\alpha_l^{\min}
\ge
\min_{m}\,
2\tilde{\sigma}_{l,m}^2
\frac{
\lambda_{l,m}^2
}{
(1+KP_{\max}\lambda_{l,m})^3
}
\ge
2\widetilde{\sigma}_{l,\min}^2\,\eta_l
\label{eq:alpha_bound}
\end{equation}
where
$\displaystyle\eta_l
\triangleq
\min_{m}
\frac{
\lambda_{l,m}^2
}{
(1+KP_{\max}\lambda_{l,m})^3
}.$
\smallskip
\item[(ii)] \textit{Off-diagonal blocks:}
Again from~\eqref{eq: Jacobian structure}, using sub-multiplicativity of the spectral norm and
$\left|
\frac{
1-\lambda_{l,m}\varphi_{l,m}
}{
(1+\lambda_{l,m}\varphi_{l,m})^3
}
\right|
\le1,$
yields
\begin{equation}
\beta_{lj}^{\max}
\le
\tilde{\sigma}_{l,\max}^2
\sup_{\boldsymbol\varphi\in\mathcal{\bar Q}}
\left\|
\left[
\frac{
\partial\lambda_{l,m}
}{
\partial\varphi_{j,n}
}
\right]_{m,n}
\right\|_2,
\quad
j\neq l.
\label{eq:beta_intermediate}
\end{equation}
Applying Theorem~\ref{lemma: R_H spectral properties} (cf \eqref{eq: sensitivity bound Lemma}), we finally obtain
\begin{equation}
\beta_{lj}^{\max}
\le
\tilde{\sigma}_{l,\max}^2
\frac{
\|\bar\H_{l,l}\|_2^2
\|\bar\H_{j,l}\|_2^2
}{
\sigma_{v_l}^4
\operatorname{tr}(\boldsymbol\Omega_l)^2
}.
\label{eq:beta_final_bound}
\end{equation}

\end{itemize}
Substituting the bounds (\ref{eq:alpha_bound})-(\ref{eq:beta_final_bound}) into~\eqref{eq:Upsilon_Gamma_same_row}, we get:
\begin{equation}
[\boldsymbol\Upsilon_{\T}]_{lj}
=
\begin{cases}
\displaystyle
2\widetilde{\sigma}_{l,\min}^2\,\eta_l,
& l=j,
\\[4mm]
\displaystyle
-
\frac{
\tilde{\sigma}_{l,\max}^2
\|\bar \H_{l,l}\|_2^2
\|\bar \H_{j,l}\|_2^2
}{
\sigma_{v_l}^4
\operatorname{tr}(\boldsymbol\Omega_l)^2
},
& l\neq j.
\end{cases}
\label{eq:Upsilon condensed entries}
\end{equation}
We can now invoke the monotone NEP framework in~\cite{scutari2012monotone} to establish convergence of the proposed semantic game.
\begin{theorem}
Consider the game
$\mathcal{G}_{\mathrm{sem}}$
and its equivalent VI reformulation
$\mathbf{VI}(\mathcal{\bar Q},\T)$.
If 
$\boldsymbol{\Upsilon}_{\mathbf{T}}$
in~\eqref{eq:Upsilon condensed entries}
is a $P$-matrix\footnote{
A matrix $\A\in\mathbb{R}^{n\times n}$ is a $P$-matrix if all its principal minors are positive.
},
then $\T$ is a uniformly $P$-function, the game admits a unique pure-strategy Nash equilibrium, and every sequence
$\{(\boldsymbol{\varphi}^{(t)}_l)_{l=1}^L\}_{t=1}^{T}$
generated by Algorithm~\ref{algo: potential game for SCE}
converges to the unique equilibrium
$\boldsymbol{\varphi}^{\star}$,
for any feasible updating scheme.
\label{thm:async_BR_convergence}
\end{theorem}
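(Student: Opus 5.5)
The plan is to invoke the monotone NEP machinery of \cite{scutari2012monotone} in three steps: show that $\T$ is a uniformly $P$-function, deduce uniqueness of the Nash equilibrium, and then deduce convergence of the best-response dynamics.

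\textbf{Step 1: $\T$ is a uniformly $P$-function.} Fix $\boldsymbol\varphi,\mathbf z\in\bar{\mathcal Q}$. By the mean-value theorem along the segment $\mathbf z+\tau(\boldsymbol\varphi-\mathbf z)$, each block difference can be written as
\[
\T_l(\boldsymbol\varphi)-\T_l(\mathbf z)=\sum\nolimits_{j}\Bigl(\int_0^1 J_{\boldsymbol\varphi_j}\T_l\,d\tau\Bigr)(\boldsymbol\varphi_j-\mathbf z_j).
\]
This requires $\T$ to be $C^1$ on the convex set $\bar{\mathcal Q}$, which holds because $\lambda_{l,m}$ depends smoothly on $\boldsymbol\varphi_{-l}$ and stays bounded away from zero by the lower bound in Theorem~\ref{lemma: R_H spectral properties}. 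Where eigenvalues cross, I would argue through symmetric functions or one-sided derivatives. The diagonal block is positive definite with minimum eigenvalue at least $\alpha_l^{\min}$, and each off-diagonal block has norm at most $\beta_{lj}^{\max}$. Using the bounds \eqref{eq:alpha_bound}--\eqref{eq:beta_final_bound}, and writing $e_l=\|\boldsymbol\varphi_l-\mathbf z_l\|$, this gives
\[
(\boldsymbol\varphi_l-\mathbf z_l)^T(\T_l(\boldsymbol\varphi)-\T_l(\mathbf z))\ge e_l[\boldsymbol\Upsilon_{\T}\mathbf e]_l .
\]
Since $\boldsymbol\Upsilon_{\T}$ is a $Z$-matrix, it is a $P$-matrix if and only if it is a nonsingular $M$-matrix. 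Then there exist $\mathbf d>0$ and $c>0$ with $\boldsymbol\Upsilon_{\T}^T\mathbf d\ge c\mathbf d$ componentwise, or equivalently $\mathbf D\boldsymbol\Upsilon_{\T}+\boldsymbol\Upsilon_{\T}^T\mathbf D\succ 0$. The standard $M$-matrix characterization (the $P$-property with a uniform constant) then gives $\max_l e_l[\boldsymbol\Upsilon_{\T}\mathbf e]_l\ge c'\|\mathbf e\|^2$ for all $\mathbf e\ge0$. Combining the two inequalities yields \eqref{eq:uniformly_P_function_definition} with $c_{uP}=c'$.

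\textbf{Step 2: uniqueness of the Nash equilibrium.} Existence already follows from Rosen's theorem, as shown earlier. For uniqueness, suppose two solutions $\boldsymbol\varphi^\star,\mathbf z^\star$ of $\mathbf{VI}(\bar{\mathcal Q},\T)$ exist. Because $\bar{\mathcal Q}$ is a Cartesian product, the VI splits blockwise. Writing each block's VI inequality at the other solution and adding them gives $(\boldsymbol\varphi^\star_l-\mathbf z^\star_l)^T(\T_l(\boldsymbol\varphi^\star)-\T_l(\mathbf z^\star))\le0$ for every $l$. This contradicts the uniformly $P$ property unless $\boldsymbol\varphi^\star=\mathbf z^\star$.

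\textbf{Step 3: convergence of Algorithm~\ref{algo: potential game for SCE}.} I would show that the best-response map $B$ is a block-contraction. Take the KKT characterization of $B_l$, i.e., the water-filling \eqref{eq:closed form solution phi} viewed as the solution of a strongly monotone per-player VI with modulus at least $\alpha_l^{\min}$. A standard perturbation argument then gives
\[
\|B_l(\boldsymbol\varphi_{-l})-B_l(\mathbf z_{-l})\|\le\sum\nolimits_{j\ne l}\frac{\beta_{lj}^{\max}}{\alpha_l^{\min}}\|\boldsymbol\varphi_j-\mathbf z_j\|.
\]
The matrix $\boldsymbol\Gamma$ with entries $\beta_{lj}^{\max}/\alpha_l^{\min}$ satisfies $\rho(\boldsymbol\Gamma)<1$ precisely because $\boldsymbol\Upsilon_{\T}$ is a $P$-matrix (again by the $M$-matrix characterization). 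Hence $B$ is a contraction in a weighted block-maximum norm $\|\cdot\|_{\infty,\mathbf w}$. Contraction in such a norm is the classical sufficient condition for convergence of both Jacobi and Gauss--Seidel, and also of totally asynchronous updates \cite{bertsekas2015parallel}. For the Krasnosel'skii--Mann relaxation \eqref{eq:best-response step size}, the map $(1-\gamma)I+\gamma B$ is still a contraction in the same norm with modulus $1-\gamma(1-\rho)$. Since $\sum_t\gamma^t=\infty$, the product $\prod_t(1-\gamma^t(1-\rho))$ tends to zero, so the iterates converge to $\boldsymbol\varphi^\star$.

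\textbf{Main obstacle.} The hardest step is the rigorous Lipschitz bound on $B_l$ in Step 3, together with the smoothness of $\lambda_{l,m}(\boldsymbol\varphi_{-l})$. A second difficulty is that the eigenvectors $\V_{h_l}$ also move with $\boldsymbol\varphi_{-l}$, so the reduced scalar game matches the matrix game only modulo this dependence. I would handle both by working with the scalarized payoff \eqref{eq:power-opt} as given and relying on the sensitivity bound of Theorem~\ref{lemma: R_H spectral properties}, treating eigenvalue crossings through Lipschitz (Clarke) Jacobians.
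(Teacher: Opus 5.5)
Your proposal is correct and follows the paper's route: the paper's proof simply invokes \cite[Th.~2]{scutari2012monotone}, and your three steps are precisely the content of that theorem. Those steps are the mean-value bound combined with the $M$-matrix characterization of the $Z$-matrix $\boldsymbol\Upsilon_{\T}$ for the uniform $P$ property, uniqueness via the blockwise VI, and the weighted block-maximum contraction of $B$ from $\rho(\boldsymbol\Gamma)<1$. Your unpacking also makes explicit that convergence comes from the contraction of $B$ rather than from uniform $P$-ness of $\T$ alone, and it covers the Krasnosel'skii--Mann relaxation of Algorithm~\ref{algo: potential game for SCE}, which the citation does not literally address; the smoothness and moving-eigenvector caveats you flag are inherited from the paper's own modeling.
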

\begin{proof}
The result follows from~\cite[Th.~2]{scutari2012monotone}. In particular, the matrix
$\boldsymbol{\Upsilon}_{\mathbf T}$
provides a worst-case characterization of the diagonal dominance properties of the pseudo-gradient Jacobian. If
$\boldsymbol{\Upsilon}_{\mathbf T}$
is a $P$-matrix, then $\T$ is a uniformly $P$-function, which guarantees uniqueness of the Nash equilibrium and convergence of the distributed best-response dynamics.
\end{proof}

The condensed matrix in~\eqref{eq:Upsilon condensed entries} provides a tractable surrogate to study the monotonicity properties of $\T$. In particular, sufficient conditions ensuring that $\boldsymbol\Upsilon_T$ is a $P$-matrix directly yield convergence guarantees for the proposed ISWF dynamics. From (\ref{eq:Upsilon_Gamma_same_row}), a sufficient condition is the existence of
$\boldsymbol\tau=(\tau_1,\ldots,\tau_L)>\mathbf 0$
such that
\begin{equation}
\frac{1}{\tau_l}
\sum_{j\neq l}
\tau_j
\frac{
\beta_{lj}^{\max}
}{
\alpha_l^{\min}
}
<1,
\qquad
\forall\, l=1,\ldots,L,
\label{eq:weighted_dd}
\end{equation}
which corresponds to a weighted strict diagonal dominance condition on $\boldsymbol\Upsilon_T$. Under~\eqref{eq:weighted_dd}, Theorem~\ref{thm:async_BR_convergence} guarantees global convergence of ISWF to the unique Nash equilibrium of $\mathcal{G}_{\mathrm{sem}}$. Substituting~\eqref{eq:alpha_bound} and~\eqref{eq:beta_final_bound} into~\eqref{eq:weighted_dd}, and defining
\[
\kappa(\tilde{\boldsymbol\Sigma}_{p_l})
\triangleq
\frac{
\widetilde{\sigma}_{l,\max}
}{
\widetilde{\sigma}_{l,\min}
},
\]
we obtain the sufficient condition
\begin{equation}
\sum_{j\neq l}
\frac{\tau_j}{\tau_l}
\|\bar{\mathbf H}_{j,l}\|_2^2
<
\frac{
2\sigma_{v_l}^4
\operatorname{tr}(\boldsymbol\Omega_l)^2
}{
\kappa(\tilde{\boldsymbol\Sigma}_{p_l})^2
\|\bar{\mathbf H}_{l,l}\|_2^2
}
\,\eta_l,
\qquad
\forall\, l\in L.
\label{eq:weighted_dd_conditioning_final}
\end{equation}
\begin{remark}
Condition~\eqref{eq:weighted_dd_conditioning_final} explicitly links the convergence of the proposed ISWF dynamics to both the physical interference topology and the semantic alignment structure. In particular, strong interference and poorly conditioned semantic subspaces, i.e., large
$\kappa(\widetilde{\boldsymbol\Sigma}_{p_l})$,
shrink the convergence region and make the distributed dynamics harder to stabilize. This occurs when semantic alignment is concentrated over a few dominant latent modes, whereas well-conditioned semantic subspaces enlarge the set of operating regimes satisfying the diagonal dominance condition. Hence,~\eqref{eq:weighted_dd_conditioning_final} establishes an explicit connection between latent-space alignment geometry and the physical interference environment.
\end{remark}

\section{Numerical Results}\label{sec:Numerical_results}
This section provides numerical results
to assess the performance of the proposed cognitive games for latent space alignment, in a multi-user MIMO 
system, considering Rician flat-fading channels with rice factor $\!=\!1.5$, path loss exponent $\!=\!2.5$, and uniform linear array with $N_{T_l}\!=\!N_{R_l}\!=\!8$.
Semantic communication is then implemented by transmitting latent representations, and its effectiveness with respect to the operational requirements of the target application is evaluated on image classification and reconstruction. We consider, for the classification task, the CIFAR-10 dataset \!(
$32 \!\times\! 32$ RGB samples).
Among them, 42500 images were used for training and 10000 for testing, with classification across 10 labels.
Considered latent representations are produced by the backbone of pre-trained models chosen from the \href{https://pypi.org/project/timm/}{\textit{timm}} Python library.  
Precisely, we deploy three MIMO point-to-point links, corresponding to secondary users, operating at carrier frequency $f_c \!=\!3.5 \text{ GHz}$, where each intended transmitter–receiver separation is $20$m, and the distance between adjacent transmitters varies from $20$m to $1200$ m. 

\begin{figure}[t]
  \centering
\includegraphics[trim=0.30cm 0.41cm 0.15cm 0.25cm,clip,width=1.0\columnwidth]{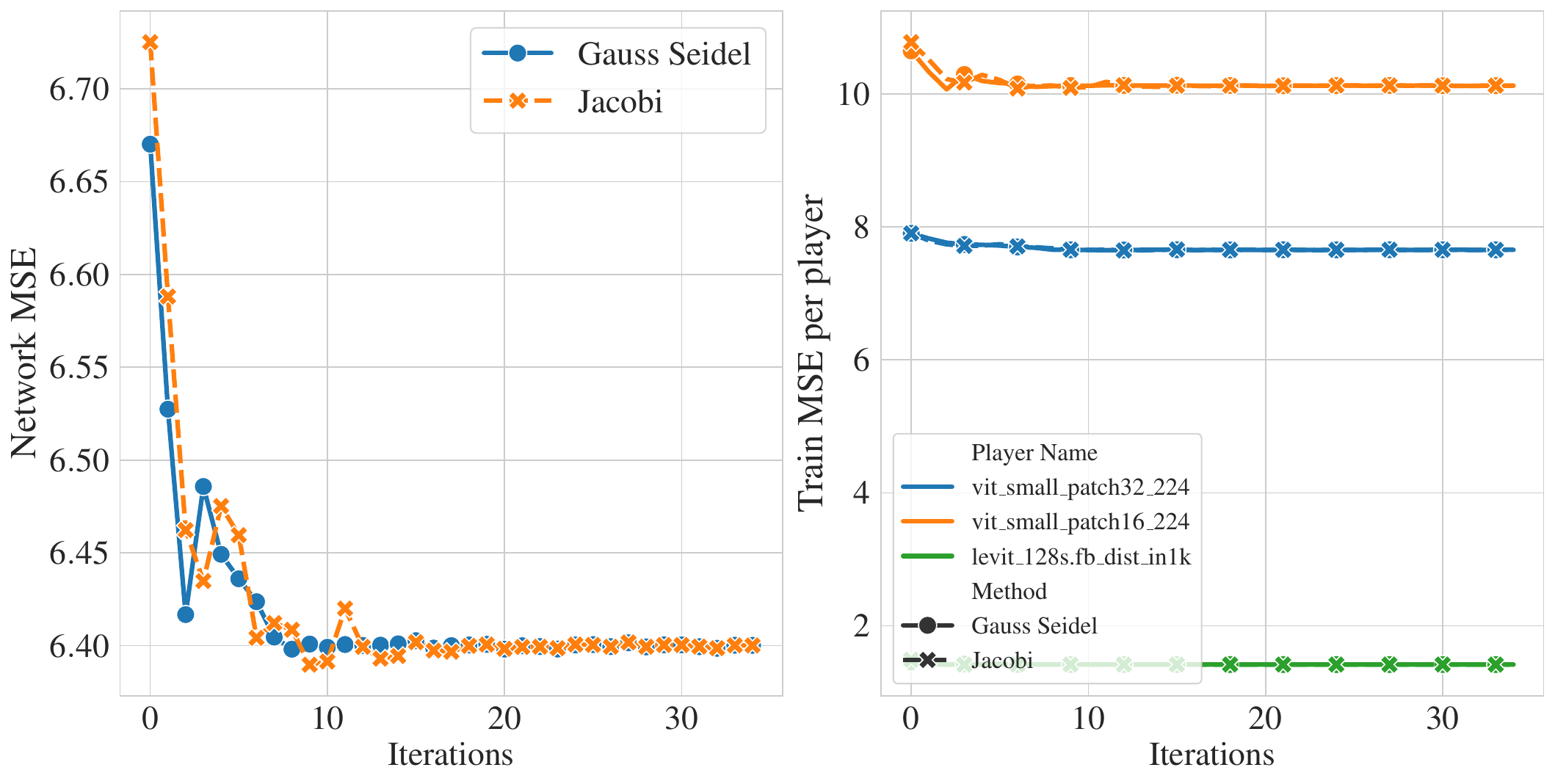}
  \caption{System BR-iterative behavior with $N_{T_l},N_{R_l}\!=\!8$, $K\!=\!1$, $\alpha^{SF}\!=\!6$. Network (left) and per player (right) $\text{MSE}_l(\boldsymbol{\Phi}_l)$ over game iterations:\textit{ vit\_small\_patch16} 
  is the tx deployed in the middle of the system, suffering the most MUI.}
  \label{fig::mse_game}
\end{figure}
Numerical simulations are evaluated by modeling the receiver
thermal noise with
$\sigma_{v_l}^2 \!=\! k_B T B$,
where $B$ denotes the noise-bandwidth, equal to $350$ MHz.
To modulate the strength of MUI term, let us define a \textit{MUI scaling factor} $\alpha^{SF} \!=\! \frac{d(T_i,T_j)}{d(T_i,R_i)}$  where the numerator represents the distance between the $i$-th transmitter and $j$-th interferer and the denominator the distance between the intended $i$-th transmitter-receiver. The scaling factor parametrizes the strength of the MUI term, i.e., the higher the scaling factor, the lower the  MUI term.
Unlicensed transmitters operate at average tx-power $\!P_{\max}\!=\!1\!$ \!W per channel use, with their internal logic given by the DNNs vit\_small\_patch32, vit\_small\_patch16, levit\_128s\.fb; receivers interpret latent codes using vit\_base\_patch16, vit\_tiny\_patch16, vit\_base\_patch32\_clip. All displayed results on task performance are averaged across multiple channel realizations,  
and the confidence bands reflect the variability among them.
For the sake of simplicity, we choose internal logic operating in transmission holding latent spaces of equal dimensions $d_l$, defining a common compression factor $\xi\!=\!
\!\frac{K}{d_l}$\! (cf. sec. \ref{sec: system_model}). In our experiments, we employed the recursive step-size update
$\gamma^{t+1}=\gamma^{t}(1-\epsilon\gamma^{t})$,
with 
$\epsilon \!=\! 10^{-3}$,
to improve the stability of the distributed dynamics. \\
\textbf{Convergence behavior.}
Figs.~\ref{fig::mse_game} and~\ref{fig::mui_game} illustrate the performance evolution of the proposed Iterative Semantic Water-Filling (ISWF) algorithm in terms of semantic alignment error, decoded MUI, and downstream task accuracy. Specifically, Fig.~\ref{fig::mse_game} reports the evolution of the MSE associated with the latent-space alignment objective, while Fig.~\ref{fig::mui_game} shows the corresponding decoded MUI power and task accuracy achieved during the distributed best-response dynamics. The generated sequence exhibits stable convergence across all users, progressively reducing both semantic distortion and multi-user interference. Interestingly, the communication link associated with the \texttt{vit\_small\_patch16} backbone (orange curve in Fig.~\ref{fig::mse_game}) experiences the largest alignment cost. This user is deployed in the central position of the network topology and is therefore subject to the strongest interference generated by the neighboring cross-links, effectively acting as the \emph{struggler} of the semantic game. Nevertheless, the proposed distributed strategy successfully adapts the semantic pre-equalizers to mitigate the interference generated toward this critical link, leading to a progressive reduction of the decoded MUI term for all users, as shown in Fig.~\ref{fig::mui_game}. At the same time, the downstream task accuracy progressively improves over the game iterations, confirming the effectiveness of the proposed latent-space alignment strategy under interference-limited conditions.
\begin{figure}[!t]
  \centering
\includegraphics[trim=0.30cm 0.41cm 0.05cm 0.25cm,clip,width=1.0\columnwidth]{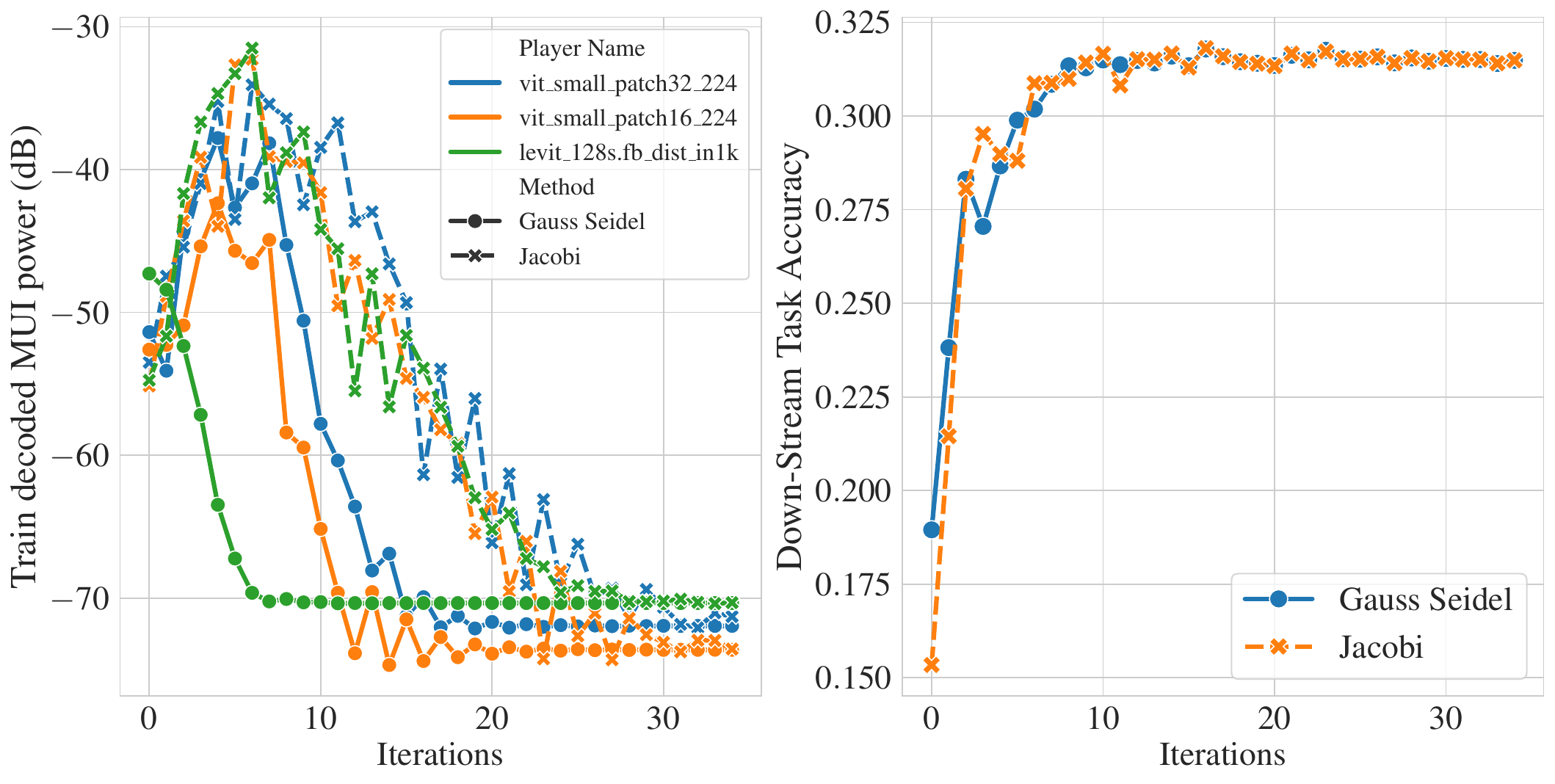}
  \caption{System BR-iterative behavior with $N_{T_l},N_{R_l}\!=\!8$, $K\!=\!1$, $\!\alpha^{SF}\!=\!6$. Decoded MUI power in dB \!(left)\! and task accuracy.}
  \label{fig::mui_game}
\end{figure}
\begin{figure*}[!t]
    \centering
    \includegraphics[
        width=\textwidth,
        trim={0.5cm 0.05cm 0.05cm 0.05cm},
        clip
    ]{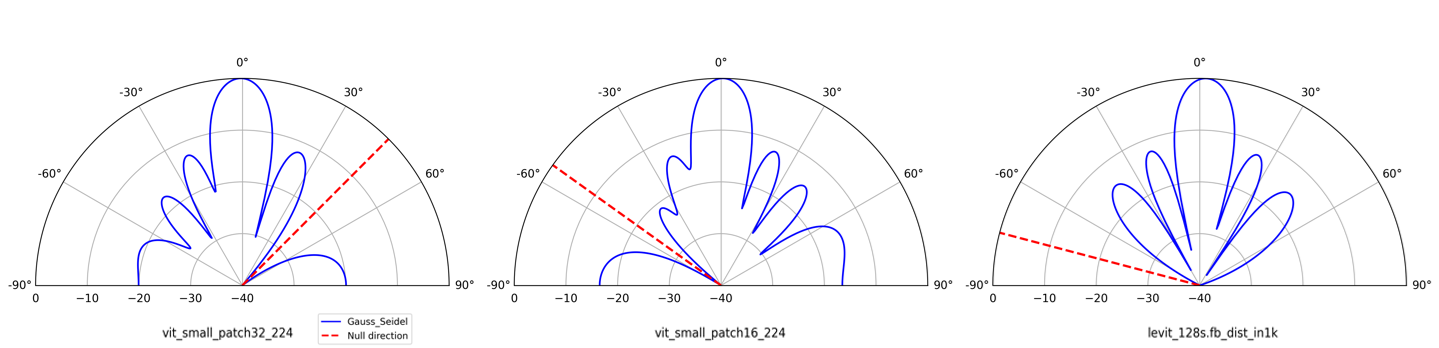}
    \caption{\!Transmitter radiation pattern of secondary user, under null interference and \!power constraint. Displayed allocated power for the sequential scheme with \!$K\!=\!1$, $\!N_{T_l}\!=\!8$ and $\!\alpha^{SF}\!=\!6$. The red dashed line identifies \!spatial direction toward primary user.}
    \label{fig:radiation_patterns}
\end{figure*}
These numerical evaluations highlight, for both update schemes, convergence
of the generated sequence toward the fixed point of the best-response dynamics defined in~\eqref{eq:BR dynamics}.
\begin{figure}[!tb]
    \centering
    \includegraphics[trim=0.30cm 0.32cm 0.15cm 0.25cm,clip,width=\columnwidth]{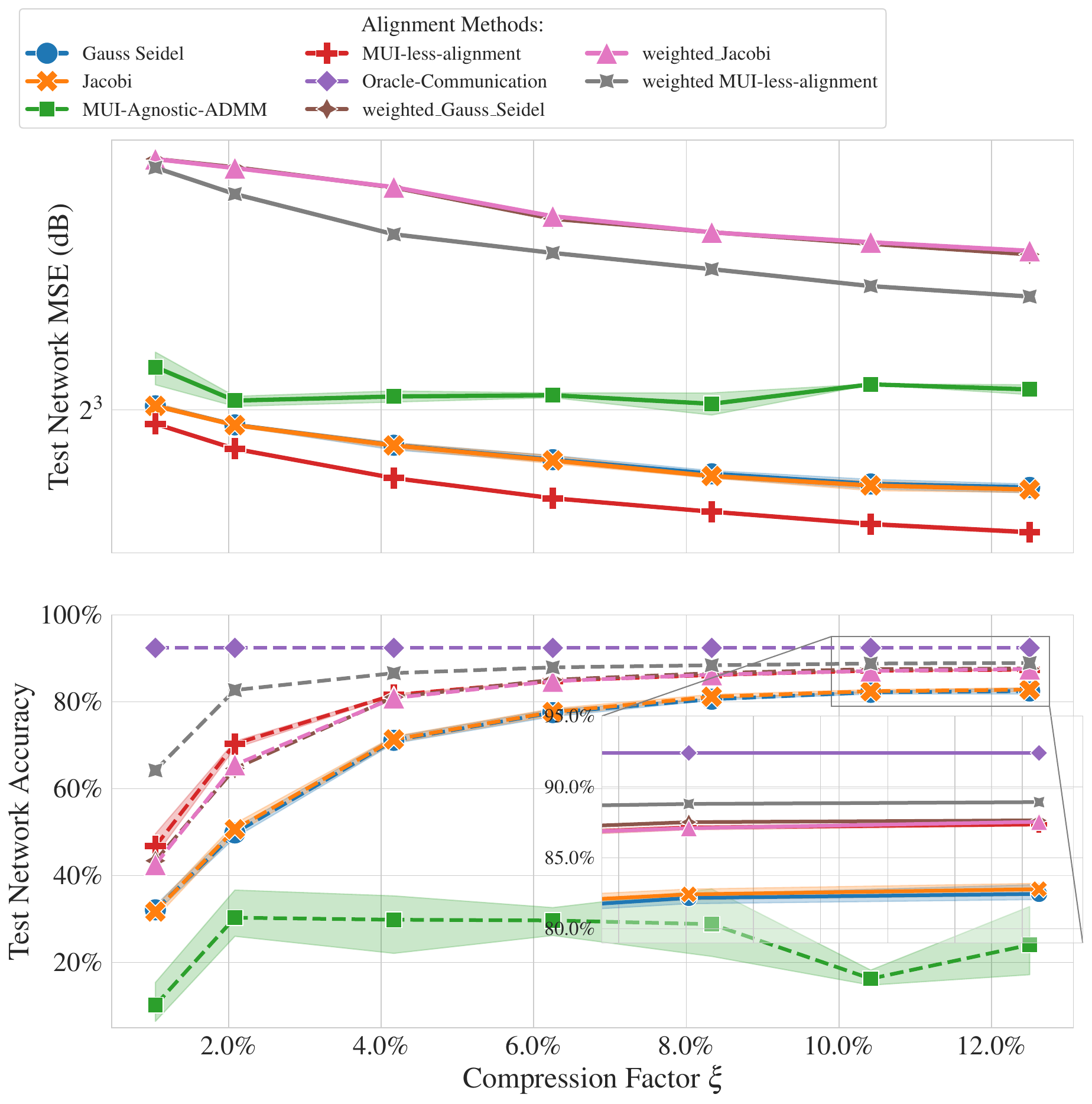}
    \caption{\small Task Accuracy-MSE vs. $\xi$, with $N_{T_l}\! =\! N_{R_l} \!=\! 8$ and $\alpha^{SF}\!=\!6$.}
\label{fig:compression_fact_classification}
\end{figure}

A physical interpretation of the learned semantic pre-equalizers is provided in Fig.~\ref{fig:radiation_patterns}, which depicts the transmitter radiation patterns obtained under null-interference and power constraints. The reformulation introduced in~\eqref{eq: equivalent null_constraint} effectively enables the identification of spatial null directions toward the protected primary user, highlighted by the red dashed line. The displayed patterns are generated under a sufficiently large Rice factor, such that the resulting channels are close to a geometric propagation regime. Interestingly, the proposed ISWF dynamics simultaneously learns to preserve the protected spatial directions while forming approximately broadside beam patterns toward the intended receivers. This behavior reflects the joint exploitation of semantic and physical channel subspaces in the distributed power allocation process.

\noindent \textbf{Task performance.}
To assess the proposed cognitive game-theoretic framework, we compare it against:
(i) \textit{MUI-less Alignment}, an ideal interference-free semantic alignment strategy obtained by neglecting the MUI contribution and setting $\R_n=\R_v$;
(ii) \textit{MUI-agnostic ADMM}, corresponding to the alternating optimization framework in~\cite{pandolfo2025latent}, where semantic MIMO transceivers are designed without cognitive interference management or primary-user protection constraints; and
(iii) \textit{Oracle-Communication}, representing the ideal upper bound corresponding to perfect semantic transmission without channel distortion. The considered schemes are evaluated on both classification and reconstruction tasks as functions of the semantic compression factor $\xi$ and the MUI scaling factor $\alpha^{SF}$, respectively in Figs.~\ref{fig:compression_fact_classification},~\ref{fig:scaling_fact_classification},~\ref{fig:compression_fact_reconstruction}, and~\ref{fig:reconstruction_example}. For the classification task, the weighting matrix $\boldsymbol\Omega_l$ in~\eqref{eq:nonconvexERM} is designed using entropy-based sample weights:
\begin{equation}
\omega_{i,l}
\triangleq
\exp\!\bigl(
\zeta\,\mathcal E(\y_{i,l})
\bigr),
\qquad
i\in\mathcal T_{r_l},
\quad
\forall l\in L,
\label{eq:entropy weights}
\end{equation}
where $\mathcal E(\y_{i,l})\in[0,1]$ denotes the normalized entropy of the downstream classifier and $\zeta$ controls the weighting strength. The rationale is to give more importance to samples on which classification confidence is lower. In the figures, the corresponding weighted schemes are labeled as \textit{Weighted}, whereas the unweighted ISWF implementations are reported as \textit{Jacobi} and \textit{Gauss--Seidel}. For the reconstruction task, we simply set $\boldsymbol\Omega_l=\I$. which corresponds to the unweighted case.

Fig.~\ref{fig:compression_fact_classification} reports latent-space MSE and classification accuracy versus the compression factor $\xi$. As $\xi$ decreases, stronger compression progressively degrades the transmitted latent information. Nevertheless, the proposed framework preserves robust task performance over a broad range of compression levels, significantly outperforming MUI-agnostic aligners and approaching the MUI-less benchmark as the number of channel uses increases. Interestingly, entropy-based weighting slightly increases latent-space MSE while improving classification accuracy, by emphasizing semantically difficult samples during alignment. This behavior is further confirmed in Fig.~\ref{fig:scaling_fact_classification}, which reports classification accuracy as a function of the MUI scaling factor $\alpha^{SF}$ under different antenna configurations. As interfering links get closer (smaller $\alpha^{SF}$), the proposed semantic game maintains reliable task execution, whereas interference-unaware schemes experience severe degradation due to the inability to suppress the MUI contribution.

Finally, Fig.~\ref{fig:compression_fact_reconstruction} reports MNIST reconstruction performance in terms of MSE and per-pixel reconstruction error, while Fig.~\ref{fig:reconstruction_example} shows representative reconstruction examples. Even in this simple setting, the proposed semantic game substantially improves reconstruction quality compared to interference-agnostic approaches, despite simultaneously enforcing spatial protection constraints toward the primary users.

\begin{figure}[!tb]
    \centering
    \includegraphics[width=\columnwidth]{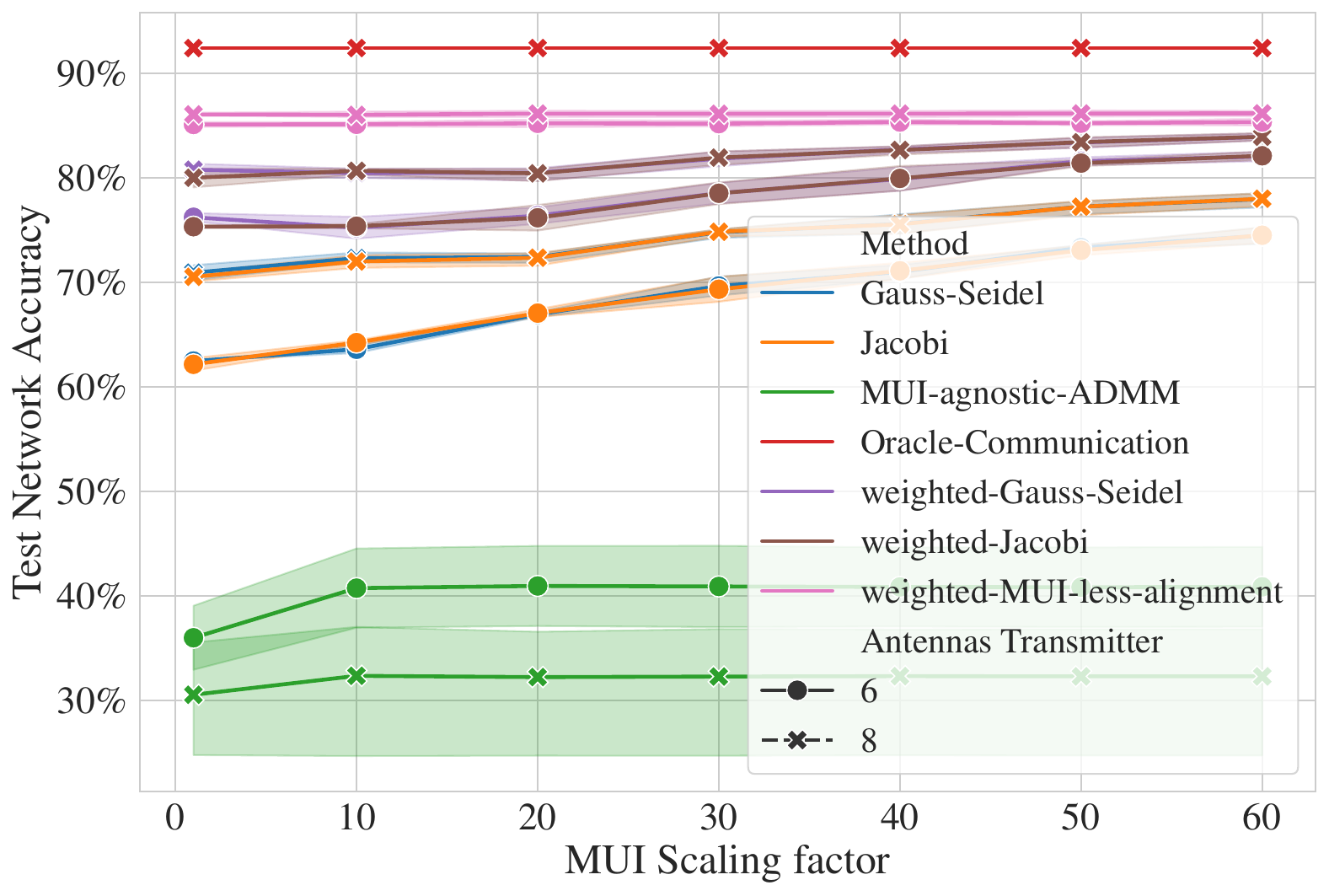}
    \caption{\small Network classification Accuracy vs. $\alpha^{SF}$ (MUI scaling factor), with $N_{T_l} = N_{R_l} \!=\! 6,8$ and wireless channel usage $K\!=\!4$.}
\label{fig:scaling_fact_classification}
\end{figure}

\begin{figure}[!t]
    \centering    \includegraphics[width=0.5\textwidth]{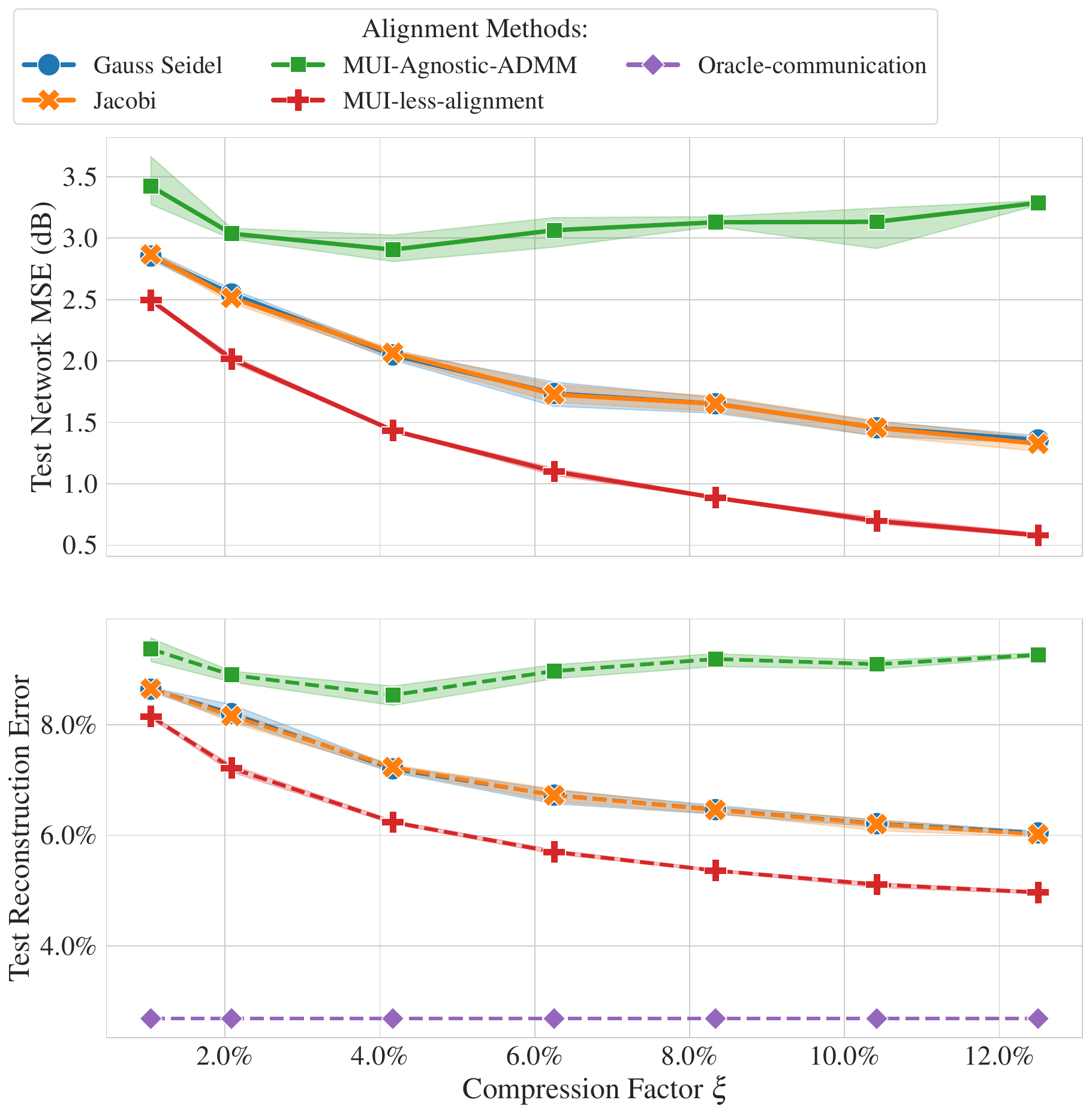}\caption{Per-pixel reconstruction error and MSE vs. $\xi$, considering MNIST dataset with $N_{T_l}\! =\! N_{R_l} \!=\! 8$ and $\alpha^{SF}\!=\!6$.}
\label{fig:compression_fact_reconstruction}
\end{figure}
\noindent \textbf{Resilience to Imperfect CSI.}
Fig.\ref{fig:CSI_estimation resilience} provides 
classification performance of Algorithm \eqref{algo: potential game for SCE} with imperfect CSI, relaxing Assumption A2. The estimated channel is modeled as in \cite{tejerina2025sum}, such that
$\hat{\H}\!=\!\sqrt{1\!-\!\rho^2}\mathbf{\H}^{(0)}\!+\! \rho\tilde{\H}$,
where ${\H}^{(0)}\!$ \!is the initial known channel, $\tilde\H$ denotes an independent Gaussian channel estimation error and \!$\rho\!\in\![0,1]\!$ quantifies the estimation accuracy. From Fig. Fig.\ref{fig:CSI_estimation resilience}, we can notice that, also with high estimation error ($\rho\!\approx\!0.8$), the performance degradation is tolerable. This further highlights the effectiveness of the proposed design for semantic transceivers, which resides in the joint exploitation of physical and semantic channel \eqref{eq:F_definition}.
\begin{figure}[!t]
    \centering
\includegraphics[width=0.5\textwidth]{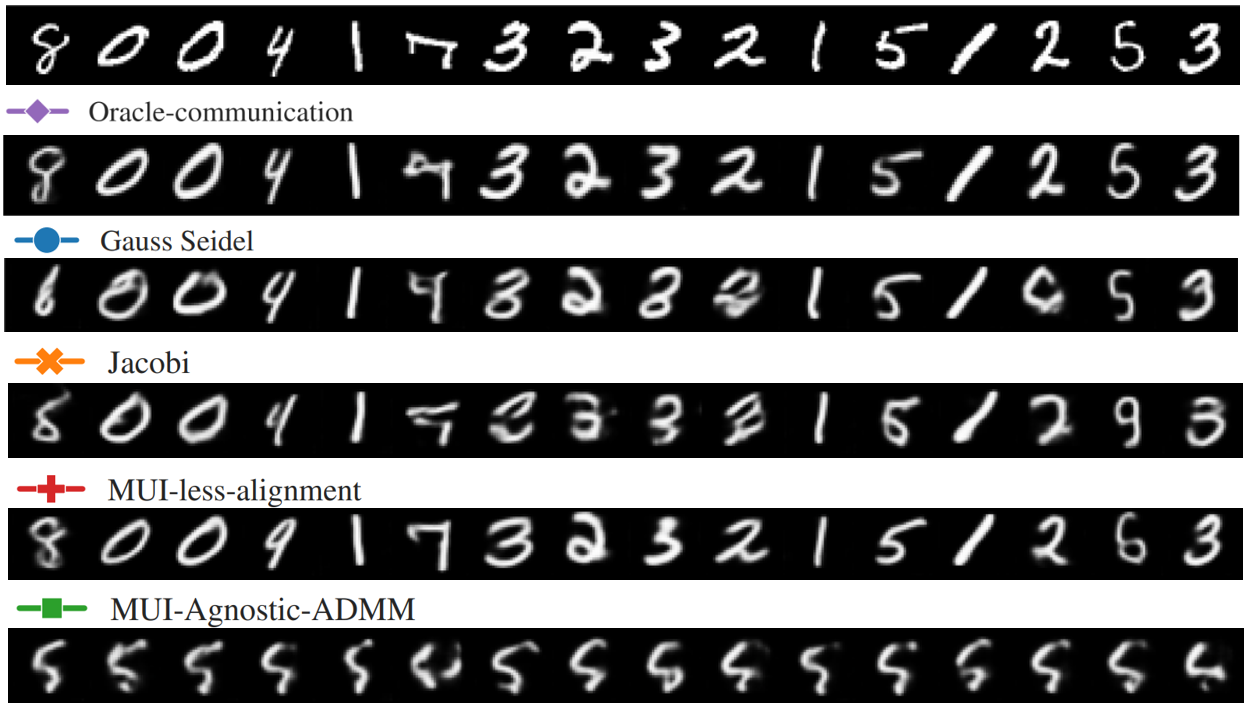}\caption{Reconstruction task example on MNIST, considering $K\!=\!12$ and $\!\alpha^{SF}\!=\!6$ at the struggler player (vit\_tiny\_patch\_16 receiver). The first row represents the target test set.
    }
\label{fig:reconstruction_example}
\end{figure}

\section{Conclusions}
\label{sec:conclusions}
In this paper, we proposed a novel framework for cognitive semantic communications that jointly addresses latent-space misalignment and semantic coexistence under interference and power constraints, through end-to-end optimized MIMO semantic transceivers. Restricting attention to linear transformations, we derived the optimal semantic equalizers up to a best rank approximation over the semantic alignment subspaces.
A key outcome of this work is the characterization of the semantic channel equalization problem over interference channels through a semantic water-filling structure, which enables the original matrix-valued latent-space alignment problem to be recast as a lower-dimensional vector power-control game. Building on this reformulation, we provided a complete game-theoretic characterization of the proposed semantic interaction, deriving sufficient conditions for existence, uniqueness, and convergence of the proposed Iterative Semantic Water-Filling algorithm. Extensive numerical results validate the effectiveness of the proposed framework, highlighting the interplay among semantic compression, interference mitigation, and latent-space alignment in AI-native wireless systems.

\begin{figure}[!t]
    \centering
\includegraphics[trim=0.30cm 0.38cm 0.25cm 0.25cm,clip,width=\columnwidth]{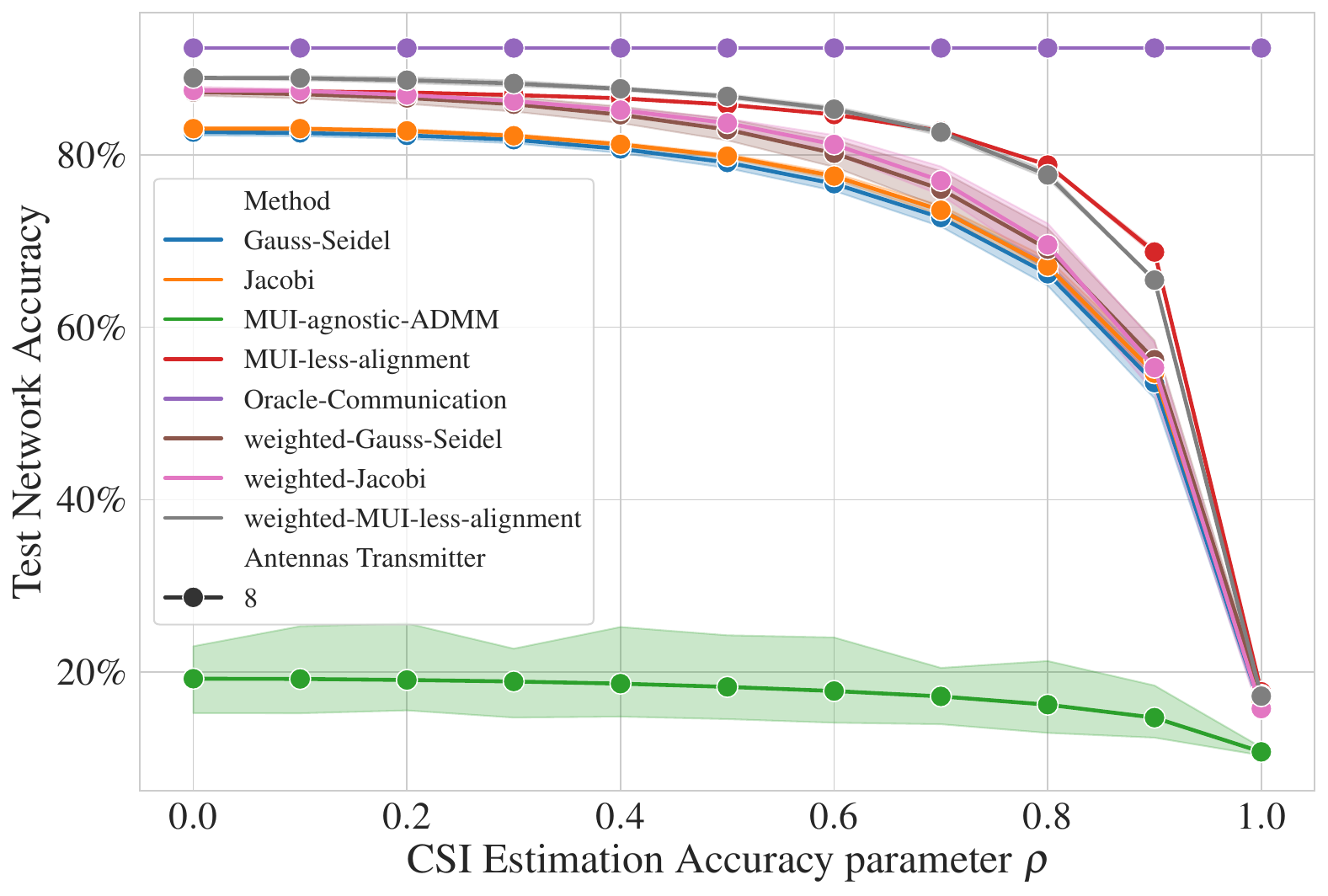}\caption{Resilience of Algorithm \ref{algo: potential game for SCE} to imperfect CSI estimation, with $K\!=\!12, \alpha^{SF}\!=\!6, N_{T_l}\!=\!8$.}
\label{fig:CSI_estimation resilience}
\end{figure}

\appendices
\section{}
 \label{appA} 
 By simply plugging 
 in the closed-form solution of $\G_l^{opt}$ by 
$(\ref{eq:Wiener_filter})$ in the objective function of problem $(\ref{eq:nonconvexERM})$, the mean square error with respect to $\F_l$ can be expressed as:
\vspace{-0.15 cm}
 \begin{equation}
    \text{MSE}_l(\F_l) \!=\! 
    \text{tr} \{ \R_{y_l}^{(\boldsymbol\Omega)} \!-\!  \P_l \left[ \S_l^H (\R_{n_l} \!+\! \S_l \S_l^H)^{-1} \S_l \right] \P_l^H \}
    \label{eq:mse_F_step1}
 \end{equation}
 with $\P_l\!=\!\Y_l \boldsymbol{\Omega}_l \X_l^H$ and $\S_l \!=\! \H_{l, 
 l}\F_l$.
 Expanding by Woodbury identity, the inversion term reads as:
 \begin{equation}
     (\R_{n_l} + \S_l \S_l^H)^{-1} \!=\! \R_{n_l}^{-1} \!-\! \R_{n_l}^{-1} \S_l (\I \!+\! \S_l^H \R_{n_l}^{\!-\!1} \S_l)^{\!-\!1} \S_l^H\R_{n_l}^{\!-\!1}
 \end{equation}
 and by Searle Set of identities \cite{petersen2008matrix}, 
 equation (\ref{eq:mse_F_step1}) becomes:
 \begin{equation}
     \text{MSE}_l(\F_l)\! =\! 
     \text{tr} \{ \R_{y_l}^{(\boldsymbol\Omega)} \!-\!
      \P_l \P_l^H + \P_l(\F_l^H \R_{H_l} \F_l + \I)^{\!-\!1} \P_l^H\}. \nonumber
\label{eq:objective_reformulated}
 \end{equation}

 \section{}
\label{app: Lemma Spectral Bound}

From~\eqref{eq:F_definition}, we have
$\bar\F_j\bar\F_j^H
=
\sum_{m=1}^{\overline{KN_{T_j}}}
\varphi_{j,m}\,
\boldsymbol\nu_{h_{j,m}}
\boldsymbol\nu_{h_{j,m}}^H,$
where $\boldsymbol\nu_{h_{j,m}}$ denotes the $m$-th column of $\V_{h_j}$. Since
$\{\boldsymbol\nu_{h_{j,m}}\}_m$
is an orthonormal set,
$\boldsymbol\nu_{h_{j,m}}\boldsymbol\nu_{h_{j,m}}^H\preceq\I$
for all $m$, and therefore
$\bar\F_j\bar\F_j^H
\preceq
\left(
\sum_m \varphi_{j,m}
\right)\I.$ Using the power constraint in~\eqref{eq:power-opt},
$\sum_m \varphi_{j,m}\le KP_{\max}$,
we obtain
\begin{equation}\label{eq:bound1}
\bar\F_j\bar\F_j^H \preceq KP_{\max}\I.    
\end{equation}
Substituting (\ref{eq:bound1}) into the reduced MUIN covariance matrix in~\eqref{eq:MUI_covariance_reduced} yields
\[
\sigma_{v_l}^2
\operatorname{tr}(\boldsymbol\Omega_l)\I
\preceq
\bar\R_{n_l}
\preceq
KP_{\max}
\sum_{j\neq l}
\bar\H_{j,l}\bar\H_{j,l}^H
+
\sigma_{v_l}^2
\operatorname{tr}(\boldsymbol\Omega_l)\I.
\]
Defining $\hat\R_{n_l}$ in \eqref{eq:Rnhat},
we have
$
\sigma_{v_l}^2
\operatorname{tr}(\boldsymbol\Omega_l)\I
\preceq
\bar\R_{n_l}
\preceq
\hat\R_{n_l}.
$
Since matrix inversion reverses the Loewner order over positive definite matrices, i.e.,
$\A\preceq\B \Rightarrow \B^{-1}\preceq\A^{-1}$,
we get
\begin{equation}\label{eq:bound2}
\hat\R_{n_l}^{-1}
\preceq
\bar\R_{n_l}^{-1}
\preceq
\big(
\sigma_{v_l}^2
\operatorname{tr}(\boldsymbol\Omega_l)
\big)^{-1}\I.    
\end{equation}
Now, recalling the effective channel covariance matrix
\begin{equation}\label{eq:effective_channel_covariance}
\bar\R_{H_l}
=
\bar\H_{l,l}^H
\bar\R_{n_l}^{-1}
\bar\H_{l,l},
\end{equation}
premultiplying and postmultiplying~\eqref{eq:bound2} by
$\bar\H_{l,l}^H$ and $\bar\H_{l,l}$, respectively, gives
\[
\bar\H_{l,l}^H
\hat\R_{n_l}^{-1}
\bar\H_{l,l}
\preceq
\bar\R_{H_l}
\preceq
\big(
\sigma_{v_l}^2
\operatorname{tr}(\boldsymbol\Omega_l)
\big)^{-1}
\bar\H_{l,l}^H\bar\H_{l,l}.
\]
The eigenvalue bounds in~\eqref{eq:R_H spectral bound} then follow by monotonicity of the eigenvalues under Loewner ordering~\cite[Cor.~9.9]{tropp2022acm}.

Next, we derive the eigenvalue sensitivity bound in ~\eqref{eq: sensitivity bound Lemma}. Differentiating~\eqref{eq:effective_channel_covariance} with respect to $\varphi_{j,n}$, and using
$\partial \bar\R_{n_l}^{-1}
=
-\bar\R_{n_l}^{-1}
(\partial \bar\R_{n_l})
\bar\R_{n_l}^{-1}$,
yields
\begin{equation}
\frac{\partial \bar\R_{H_l}}
{\partial \varphi_{j,n}}
=
-
\bar\H_{l,l}^H
\bar\R_{n_l}^{-1}
\left(
\frac{\partial \bar\R_{n_l}}
{\partial \varphi_{j,n}}
\right)
\bar\R_{n_l}^{-1}
\bar\H_{l,l}.
\label{eq:dAl_appendix}
\end{equation}
By eigenvalue perturbation theory~\cite[Eq.~67]{petersen2008matrix}, the derivative of the $m$-th eigenvalue of $\bar\R_{H_l}$ with respect to $\varphi_{j,n}$ is
\begin{equation}\label{eq:delambda}
\frac{\partial \lambda_{l,m}}
{\partial \varphi_{j,n}}
=
\boldsymbol\nu_{l,m}^H
\frac{\partial \bar\R_{H_l}}
{\partial \varphi_{j,n}}
\boldsymbol\nu_{l,m},
\end{equation}
where $\boldsymbol\nu_{l,m}$ is the eigenvector associated with $\lambda_{l,m}$. Substituting~\eqref{eq:dAl_appendix} into~\eqref{eq:delambda} gives
\[
\frac{\partial \lambda_{l,m}}
{\partial \varphi_{j,n}}
=
-
\boldsymbol\nu_{l,m}^H
\bar\H_{l,l}^H
\bar\R_{n_l}^{-1}
\left(
\frac{\partial \bar\R_{n_l}}
{\partial \varphi_{j,n}}
\right)
\bar\R_{n_l}^{-1}
\bar\H_{l,l}
\boldsymbol\nu_{l,m}.
\]
Combining~\eqref{eq:delambda} and~\eqref{eq:dAl_appendix} with
$|\a^H\B\a|\le\|\B\|_2\|\a\|_2^2$
yields
\begin{equation}
\left|
\frac{\partial \lambda_{l,m}}
{\partial \varphi_{j,n}}
\right|
\le
\|\bar\H_{l,l}\|_2^2
\|\bar\R_{n_l}^{-1}\|_2^2
\left\|
\frac{\partial \bar\R_{n_l}}
{\partial \varphi_{j,n}}
\right\|_2.
\label{eq:normbound_appendix}
\end{equation}
Moreover, differentiating~\eqref{eq:MUI_covariance_reduced} gives
\begin{equation}
\frac{\partial \bar\R_{n_l}}
{\partial \varphi_{j,n}}
=
\bar\H_{j,l}
\boldsymbol\nu_{h_j,n}
\boldsymbol\nu_{h_j,n}^H
\bar\H_{j,l}^H,
\label{eq:derivativeRn}
\end{equation}
and therefore
\begin{equation}
\left\|
\frac{\partial \bar\R_{n_l}}
{\partial \varphi_{j,n}}
\right\|_2
=
\|
\bar\H_{j,l}\boldsymbol\nu_{h_j,n}
\|_2^2
\le
\|\bar\H_{j,l}\|_2^2.
\label{eq:boundRn}
\end{equation}
Furthermore, from~\eqref{eq:bound2},
\begin{equation}
\|\bar\R_{n_l}^{-1}\|_2
\le
\frac{1}{
\sigma_{v_l}^2
\operatorname{tr}(\boldsymbol\Omega_l)
}.
\label{eq:boundRninv}
\end{equation}
Finally, substituting~\eqref{eq:boundRn} and~\eqref{eq:boundRninv} into~\eqref{eq:normbound_appendix} yields
\[
\left|
\frac{\partial \lambda_{l,m}}
{\partial \varphi_{j,n}}
\right|
\le
\frac{
\|\bar\H_{l,l}\|_2^2
\|\bar\H_{j,l}\|_2^2
}{
\sigma_{v_l}^4
\operatorname{tr}(\boldsymbol\Omega_l)^2
},
\]
which proves~\eqref{eq: sensitivity bound Lemma}.

\balance
\bibliographystyle{IEEEtran}
\bibliography{ref}
\end{document}